\algrenewcommand\textproc{\sffamily}
\algrenewcommand\algorithmicindent{1em}
\def\@acmplainindent{0pt}
\def\@acmdefinitionindent{0pt}
\def\@proofindent{\noindent}
\theoremstyle{acmdefinition}
\renewcommand\subsubsection{\@startsection{subsubsection}{3}{10pt}%
  {\smallskipamount}%
  {-3.5\p@}%
  {\@subsubsecfont\@adddotafter}}
\renewcommand\paragraph{\@startsection{paragraph}{4}{\parindent}%
  {\smallskipamount}%
  {-3.5\p@}%
  {\@parfont\@adddotafter}}
\def\KK{\mathbb{K}}
\def\QQ{\mathbb{Q}}
\def\calg{\overline{\mathclap{\phantom{t}}\smash{\KK}}}
\def\ind{\operatorname{ind}}
\def\ud{\mathrm{d}}
\def\Exc{\operatorname{Exc}_M}
\def\im{\operatorname{im}}
\def\st{\ \middle|\ }
\def\ord{\operatorname{ord}}
\def\Sing{\operatorname{Sing}}
\def\ann{\operatorname{ann}}
\def\wrt{w.r.t. }
\def\eqsmall{\fontsize{8pt}{\baselineskip}\selectfont}
\title[Generalized Hermite Reduction, Creative Telescoping and Definite Integration]{Generalized Hermite Reduction, Creative Telescoping \\ and Definite Integration of D-Finite Functions}
\author{Alin Bostan}
 \affiliation{%
   \institution{Inria, France}
 }
\email{alin.bostan@inria.fr}
\author{Frédéric Chyzak}
 \affiliation{%
   \institution{Inria, France}
 }
\email{frederic.chyzak@inria.fr}
\author{Pierre Lairez}
 \affiliation{%
   \institution{Inria, France}
 }
\email{pierre.lairez@inria.fr}
\author{Bruno Salvy}
\affiliation{%
  \institution{Inria, France}
}
\email{bruno.salvy@inria.fr}
\begin{document}

\copyrightyear{2018}
\acmYear{2018}
\setcopyright{othergov}
\acmConference[ISSAC '18]{2018 ACM International Symposium on Symbolic and Algebraic Computation}{July 16--19, 2018}{New York, NY, USA}
\acmBooktitle{ISSAC '18: 2018 ACM International Symposium on Symbolic and Algebraic Computation, July 16--19, 2018, New York, NY, USA}
\acmPrice{15.00}
\acmDOI{10.1145/3208976.3208992}
\acmISBN{978-1-4503-5550-6/18/07}

\begin{abstract} Hermite reduction is a classical algorithmic tool in symbolic
integration. It is used to decompose a given rational function as a sum of a
function with simple poles and the derivative of another rational function.
We extend Hermite reduction to arbitrary linear differential operators
instead of the pure derivative,
and develop efficient algorithms for this reduction. We
then apply the generalized Hermite reduction to the computation of linear
operators satisfied by single definite integrals
of D-finite functions of several continuous or discrete parameters.
The resulting algorithm is a generalization of reduction-based methods
for creative telescoping.
\end{abstract}

\maketitle

\section{Introduction}

Ostrogradsky\footnote{Most references in symbolic
integration attribute to Ostrogradsky an algorithm to compute $U$ and $A$
based on linear algebra. As a matter of fact, Ostrogradsky introduced before
Hermite a polynomial method, based on extended gcds. In passing, he invented
an efficient algorithm for squarefree factorization, rediscovered by
Yun~\cite{Yun76,Yun77}.}~\cite{Ostrogradsky1845} and Hermite~\cite{Hermite1872} showed how to
decompose the indefinite integral $\int R $ of a rational function~$R \in
\QQ(x)$ as~$U + \int A$, where~$U, A\in\QQ(x)$, and where~$A$ has only simple
poles and vanishes at infinity. Their contributions consist in \emph{rational
algorithms} to compute~$A$ and~$U$, that is algorithms which do not require to
manipulate the roots in $\smash{\overline{\QQ}}$ of the denominator of~$R$, but merely
its (squarefree) factorization. The rational function $A$ is classically
called the \emph{Hermite reduction} of~$R$.
In other words, the Hermite reduction of~$R$ is a \emph{canonical form} of~$R$
modulo the derivatives in~$\QQ(x)$: it depends $\QQ$-linearly on~$R$, it is
equal to~$R$ modulo the derivatives and it vanishes if and only if~$U' = R$
for some~$U\in\QQ(x)$.

We call \emph{generalized Hermite problem} the analogous question for
inhomogeneous linear differential equations of arbitrary order
\begin{equation}\label{eq:inhomdeq}
	c_r(x)y^{(r)}(x)+\dots+c_0(x)y(x)=R(x),
\end{equation}
where~$R$ and the~$c_i$ are rational functions in~$\KK(x)$, over some
field~$\KK$ of characteristic zero. In operator notation,
given~$L=c_r\partial_x^r+\dots+c_0\in\KK(x)\langle\partial_x\rangle$, the
problem is to produce a rational function~$[R]$ in~$\KK(x)$, that depends
$\KK$-linearly on~$R$, that is equal to~$R$ modulo the image~$L(\KK(x))$ and
that vanishes if and only if~$R$ is in~$L(\KK(x))$.

Equations like Eq.~\eqref{eq:inhomdeq} occur in relation to \emph{integrating
factors}, and ultimately to \emph{creative telescoping}. If $L^*$
denotes the \emph{adjoint} of~$L$, defined
as $L^* = \sum_{i=0}^r (-\partial_x)^i c_i(x)$, then for any
function~$f$, integration by parts produces Lagrange's
identity~\cite[\S5.3]{Ince44}
\begin{equation}\label{eq:Lagrange}
uL(f) - L^*(u) f = \partial_x \left( P_L(f,u) \right),
\end{equation}
where $P_L$ depends linearly on
$f,\ldots,f^{(r-1)},u,\ldots,u^{(r-1)}$. It follows that if $f$ is a
solution of $L$, then any~$R \in L^*(\KK(x))$ is an integrating factor of~$f$,
meaning that $Rf$ is a derivative of a $\KK(x)$-linear combination of~$f$
and its derivatives.
The converse holds if~$L$ is an operator of minimal order
canceling~$f$, see Proposition~\ref{prop:novector}.

\subsection*{Contributions}
We introduce a \emph{generalized Hermite reduction} to compute such
a~$[R]$. Classical Hermite reduction addresses the case~$L=\partial_x$. The
algorithm operates locally at each singularity and it avoids algebraic
extensions, similarly to classical Hermite reduction.

Next, we improve Chyzak's algorithm~\cite{Chyzak00}
for creative telescoping with the use of generalized Hermite reduction.
Recall that creative telescoping is an
algorithmic way to compute integrals by repeated differentiation under the
integral sign and integration by parts~\cite{AlZe90}.
Chyzak's  algorithm repeatedly checks for the existence of a
rational solution to equations like~\eqref{eq:inhomdeq}.
A lot of time is spent checking that none exists.
The use of generalized Hermite reduction makes the computation
incremental and less redundant.

As a simple instance of the creative telescoping problem, let~$f(t,x)$ be a
function annihilated by a linear
differential operator $L\in\QQ(t,x)\langle\partial_x\rangle$ in the
differentiation with respect to $x$
only, and such that $\partial_t(f)=A(f)$ for another operator~$A$ also in
$\QQ(t,x)\langle\partial_x\rangle$.
We look for the minimal relation of the form
\begin{equation}\label{eq:5}
\lambda_0f+\dots+\lambda_s\partial_t^s(f)=\partial_x(G),
\end{equation}
with $\lambda_0,\dotsc,\lambda_s\in \QQ(t)$ and~$G(t,x)$ in the function space
spanned by~$f$ and its derivatives,
with the motive that integrating both sides with respect to~$x$ may lead to
something useful:
on the right-hand side, the integral of the derivative simplifies, often to~$0$,
and on the left-hand side, the integration commutes with the~$\lambda_i
\partial_t^i$, yielding a differential equation for $\smash{\int} f(t,x)\,\ud
x$.
In Equation~\eqref{eq:5}, the left-hand side is called the \emph{telescoper}
and the function~$G(t,x)$ the \emph{certificate}.

The new algorithm constructs a sequence of rational functions
$R_0, R_1,\dotsc$ in $\QQ(t,x)$ such that $\partial_t^i(f) =
R_if+\partial_x(\dotsc)$. Equation~\eqref{eq:5} holds
if and only if~$\lambda_0 R_0 +\dotsb + \lambda_s R_s$ is an integrating factor
of~$f$,
which in turn is equivalent to the relation
\begin{equation}\label{eq:6}
  \lambda_0[R_0]+\dots+\lambda_s[R_s]=0,
\end{equation}
where~$[\ ]$ is the generalized Hermite reduction with respect to~$L^*$.
Starting with~$s=0$, we search for solutions of the equation above and
increment~$s$ until one is found.
Chyzak's algorithm would solve Equation~\eqref{eq:5} at each iteration mostly from
scratch, whereas the new algorithm
retains the reduced forms $[R_i]$ from one iteration to the next,
computes~$[R_{s}]$ from~$[R_{s-1}]$ and solves the straightforward
Equation~\eqref{eq:6}.
This approach to creative telescoping generalizes to several
parameters~$t_1,\dotsc,t_e$ in the integrand and to different kinds of operators acting on them, in the setting of
Ore algebras.

The order of the telescopers and even the termination of the creative telescoping process are related to the confinement properties of the generalized Hermite
reduction. Assuming that the poles of the rational functions~$R_0,R_1,\dotsc$ all lie in the same finite set, we deduce from a result of Adolphson's an upper bound on the dimension of the subspace spanned by the reductions~$[R_i]$, which in turn bounds the order of the minimal telescoper.

\subsection*{Previous work}

\paragraph{Extensions of Hermite reduction.}
Ostrogradsky~\cite{Ostrogradsky1845} and Hermite~\cite{Hermite1872} introduced
a reduction for rational functions. A century later, it was extended to larger
and larger classes of functions: algebraic~\cite{Trager84},
hypergeometric~\cite{AbPe01},
hyperexponential~\cite{Davenport86,GeLeLi04,BoChChLiXi13,BoDuSa16},
Fuchsian~\cite{ChHoKaKo18}. Van der Hoeven's preprint~\cite{Hoeven17}
considers a reduction w.r.t. the derivation operator on
differential modules of finite type, so as to address the general
differentially
finite case. Our
generalized Hermite reduction is inspired by these works. It has the same
architecture as several previous ones\cite{BoChChLiXi13,ChHoKaKo18,BoDuSa16,Hoeven17}: local reductions at finite places, followed
by a reduction at infinity and the computation of an exceptional set to
obtain a canonical form.
Our first contribution in the present paper is to open
a new direction of generalization, namely by considering reductions
with respect to other operators in $\KK(x)\langle\partial_x\rangle$ than
the derivation operator~$\partial_x$,
acting on the space $\KK(x)$ of rational
functions.
An extra benefit of our method is to avoid algebraic extensions of~$\KK$.

\paragraph{Index theorems.}
The finite-dimensionality of a function space
modulo the image of a differential operator
is crucial to the termination of our reduction and creative-telescoping algorithms.
This finiteness, and even explicit bounds, are given by \emph{index theorems}
for differential equations~\cite{Malgrange74}.
Rational versions
appeared in work by Monsky~\cite{Monsky72} related to the finiteness of de
Rham cohomology, and by Adolphson~\cite{Adolphson76} in a $p$-adic context,
see also~\cite{Rezaoui01,PuRe08}, and~\S\ref{sec:bound-dimens-quot}.

\paragraph{Creative telescoping by reduction.} The use of Hermite-like
reductions for computing definite integrals roots in works by
Fuchs~\cite{Fuchs1870} and Picard~\cite{Picard1902,PiSi1897}.
In the realm of creative telescoping,
this line of research forms what is called
the fourth generation of creative telescoping algorithms.
It was first introduced for bivariate rational functions~\cite{BoChChLi10},
and later extended to the multivariate rational case~\cite{BoLaSa13,Lairez16}.
For bivariate functions/sequences, the approach was also extended
to larger classes:
algebraic~\cite{ChKaSi12,ChKaKo16},
hyperexponential~\cite{BoChChLiXi13},
hypergeometric~\cite{ChHuKaLi15,Huang16},
mixed~\cite{BoDuSa16},
Fuchsian~\cite{ChHoKaKo18},
differentially finite~\cite{Hoeven17}.
Our second contribution is the first reduction-based variant,
for single integrals,
of Chyzak's algorithm~\cite{Chyzak00}
for D-finite functions depending
on several continuous or discrete parameters.

\section{Introductory Example}
\label{sec:an-intr-deta}

\subsection{Hermite Reduction}\label{sec:ex-hermite-reduction}
The equation~$M(y)= a x^2+ bx + c $, with $M$ defined by
\[M(y)=(x^2-1)y''+(x-2p(x^2-1))y'+(p^2(x^2-1)-px-n^2)y,\]
has a rational solution~$y \in \QQ(n,p,x)$ if and only if
$a x^2 + bx + c$ is a multiple of~$p^2 x^2 - px - n^2-p^2$. %
This follows in two steps.

First,
a local analysis reveals that if~$y$ has a pole at some~$\alpha\in\mathbb{C}$,
then so does~$M(y)$:
for any~$\alpha\in\mathbb{C}\setminus \{\pm 1\}$ and for any~$s > 0$,
\begin{align*}
  M\!\left( (x-\alpha)^{-s} \right) &= (\alpha^2-1) s (s+1)  (x-\alpha)^{-s-2} \left(1 + O(x-\alpha) \right) \\
  \text{and }  M\!\left( (x\pm 1)^{-s} \right) &= \pm s(2s+1)(x\pm1)^{-s-1}\left(1+O(x\pm 1)\right).
\end{align*}
Therefore, if~$M(y)$ is a polynomial then~$y$ is also a polynomial.

Next, for any~$s \geq 0$,
$ M(x^s) = p^2 x^{s+2}+O(x^{s+1})$, as $x\to \infty$.
It follows that if~$y \in \QQ(n,p)[x]$ then~$M(y) \in \QQ(n,p)[x]$ and~$\deg_x
M(y) = \deg_x y + 2$. In particular, every polynomial of degree $\leq$ 2 in~$M(\QQ(n,p,x))$ is
a multiple of~$M(1)=p^2 x^2 - px - n^2-p^2$ over~$\QQ(n,p)$.

In~\S \ref{sec:canonical-forms}, we define the Hermite reductions \wrt $M$ of~$1$,
$x$ and~$x^2$:
\[ [1] = 1, \quad [x] = x, \quad \text{and} \quad [x^2] = \frac{x}{p} +
  \frac{n^2+p^2}{p^2}, \]
showing that $[p^2 x^2 - px - n^2-p^2] = 0$.
Similarly, the reduction of any polynomial \wrt~$M$
is a $\QQ(n,p)$-linear combination of $1$ and~$x$.

\subsection{Creative Telescoping}\label{sec:ex-ct}
We consider the classical integral identity
\cite[\S2.18.1, Eq. (10)]{PrBrMa88}
\[\int_{-1}^{1}{\frac{e^{-px}T_n(x)}{\sqrt{1-x^2}}\,\ud x}=(-1)^n\pi
I_n(p),\]
where~$T_n$ denotes the $n$th Chebyshev polynomial of the first kind and $I_n$
the $n$th modified Bessel function of the first kind.
The integrand~$F_n (p,x)$
satisfies a system of linear differential and difference equations, easily found from
defining equations for~$T_n(x)$ and $e^{-px}$:
\begingroup\eqsmall
  \begin{gather*}
  \frac{\partial F_n}{\partial p}=-xF_n, \quad
  nF_{n+1}=\frac{\partial}{\partial x}\left((x^2-1)F_n\right)+(p
  x^2+(n-1)x-p)F_n,\\
 (1-x^2)\frac{\partial^2F_n}{\partial x^2} =  (2px^2+3x-2p)
\frac{\partial F_n}{\partial x}+(p^2x^2+3px-n^2-p^2+1)F_n.
\end{gather*}
\endgroup

We aim at finding a similar set of
linear differential-difference
operators in the variables~$n$ and~$p$ for the integral
$\int_{-1}^1 F_n(p,x)\,\ud x$.
Note that~$F_n$
and all its derivatives \wrt $x$ and~$p$ and shifts
\wrt~$n$ are $\QQ(n,p,x)$-linear combinations of~$F_n$ and~$\partial
F_n/\partial x$.

The \emph{adjoint} of the last equation
is~$M(y)=0$, with the operator~$M$ of \S\ref{sec:ex-hermite-reduction}.
The reduction \wrt $M$ described above makes the following computation possible.
First, $F_n$ is not a derivative (of a $\QQ(n,p,x)$-linear combination of~$F_n$ and~$\partial F_n/\partial x$).
Indeed, $F_n$ is a derivative if and only if~$1 \in M(\QQ(n,p))$.
Second, no $\QQ(n,p)$-linear relation between~$F_n$
and~${\partial F_n}/{\partial p}$ is a derivative, because ${\partial F_n}/{\partial p} = -x
F_n$ and $[1]$ and~$[-x]$ are linearly independent over $\QQ(n,p)$.
Third, the $\QQ(n,p)$-linear relation $p^2 [x^2] + p[-x] - (n^2+p^2)[1] = 0$
proves that
\begin{equation}\label{eq:ex-reldp2}
p^2\frac{\partial^2F_n}{\partial p^2}+p\frac{\partial F_n}
{\partial
p}-(n^2+p^2) F_n =\frac{\partial G}{\partial x}
\end{equation}
for some $\QQ(n,p,x)$-linear combination $G$ of $F_n$ and $\partial
F_n/\partial x$. Next, the equation for~$n F_{n+1}$
and the equation
$[px^2 + (n-1)x - p] = nx + n^2/p$
show that, for some $\tilde G$ as above,
\begin{equation}\label{eq:2}
F_{n+1}+\frac{\partial F_n}{\partial p}-\frac{n}{p}F_n=
\frac{\partial \tilde G}{\partial x}.
\end{equation}

Equations~\eqref{eq:ex-reldp2} and~\eqref{eq:2} can then be integrated from~$-1$
to~1. The contour can be deformed so that the right-hand sides vanish (regardless of~$G$ and~$\tilde G$) and the left-hand sides
provide the desired operators for the integral.
These equations classically define, up to a constant factor, the function
$(-1)^n I_n(p)$.

\section{Generalized Hermite reduction}
\label{sec:canonical-forms}

Throughout this section, $M \in\KK[x] \langle \partial_x \rangle$ denotes a
linear differential operator with polynomial coefficients. We are interested
in finding $\KK$-linear dependency relations in~$\KK(x)$ modulo the rational
image~$M(\KK(x))$ by means of a \emph{canonical form with respect to $M$}.

\begin{definition}\label{def:canonical}
  A \emph{canonical form with respect to $M$} is a  $\KK$-linear map~$[\ ] :
  \KK(x) \to \KK(x)$ such that for any~$R\in \KK(x)$:
  \begin{quote}
    \begin{enumerate*}[(i)]
    \item \ $[M(R)] = 0$;\quad\quad\quad \mbox{}
    \item \ $R - [R] \in M(\KK(x))$.
    \end{enumerate*}
  \end{quote}
\end{definition}
\noindent Applying~$[\ ]$ to~$R - [R]$ before using (ii) and~(i) results in $[[R]] = [R]$.

As can be seen from Eq.~\eqref{eq:inhomdeq},
computing such canonical forms is tightly
related to the computation of rational solutions of linear
differential equations.
In classical solving algorithms~\cite{AbKv91,Liouville33},
bounds on the order of poles of meromorphic solutions are given by
indicial equations. Next, in order to factor the computation
for different inhomogeneous parts, instead of using a ``universal denominator'',
one could at each singularity identify the polar behaviour of
potential meromorphic solutions,
so as to reduce rational solving to polynomial solving.
This idea is what inspired the reduction algorithm for computing canonical forms
in the present section%
\footnote{In the case of systems,
analogues of indicial equations are more complicated;
several alternatives for rational solving exist~\cite{Abramov99,Barkatou99},
that resemble the reduction %
in~\cite{Hoeven17}.}.

\medskip

We begin in \S
\ref{sec:shifts-indic-equat} with a local analysis of~$M(\KK(x))$. Then we describe in \S
\ref{sec:weak-herm-reduct} a projection
map~$H : \KK(x) \to \KK(x)$ that we call \emph{weak Hermite reduction}. It is
not quite a canonical form. It misses an \emph{exceptional set} described
in~\S \ref{sec:canonical-form}, from which a canonical form is deduced. For
simplicity, this is first described in the algebraic closure of the base
field~$\KK$, and in \S \ref{sec:comp-an-herm} we show how to perform the
computations in a rational way, i.e., without algebraic extensions.

Finally, in \S \ref{sec:bound-dimens-quot}, we bound the dimension of the
quotient~$E/M(E)$,
for a ring~$E$ of rational functions with prescribed poles.
This is %
relevant to getting size %
and complexity
bounds for creative telescoping.

\subsection{Local Study}
\label{sec:shifts-indic-equat}

Let~$\calg$ be an algebraic closure of~$\KK$.
For~$R\in\KK(x)$ and~$\alpha\in\calg$, let~$R_{(\alpha)}$ denote the \emph{polar
  part} of~$R$ at~$\alpha$. This is the unique polynomial
in~$\smash{(x-\alpha)^{-1}}$ with constant term zero
such that $R - R_{(\alpha)}$ has no pole at~$\alpha$.
Similarly, the \emph{polynomial part} $R_{(\infty)}$ of~$R$ is the unique
polynomial such that~$R - R_{(\infty)}$ vanishes at infinity.
By partial fraction decomposition,
\begin{equation}\label{eq:partialfraction}
 R = R_{(\infty)} + \sum_{\alpha\in\calg} R_{(\alpha)}.
\end{equation}
Let also~$\ord_\alpha R$ denote the valuation of~$R$ as a Laurent series in~$x-\alpha$.

For any~$\alpha\in\calg$, there exists a non-zero polynomial~$\ind_\alpha \in \calg[s]$
and an integer~$\sigma_\alpha$ such that for any~$s\in\mathbb{Z}$,
\begin{equation}\label{eq:1}
M\!\left( (x-\alpha)^{-s} \right) = \ind_\alpha(-s)(x-\alpha)^{-s+\sigma_\alpha} ( 1 + o(1) ),\quad \text{as $x\to\alpha$.}
\end{equation}
The polynomial $\ind_\alpha$ is classically called the \emph{indicial polynomial
  of~$M$ at $\alpha$} \cite{PuSi03,Ince44};
we call the integer $\sigma_\alpha$ the \emph{shift of~$M$ at $\alpha$}.
The indicial polynomial and its integer roots give a detailed understanding of the
image of $M$. %
We similarly define the shift and the indicial polynomial at~$\infty$ by the
equation
\[ M(x^s) = \ind_\infty(-s) x^{s-\sigma_\infty} (1+o(1)), \quad \text{as $x\to \infty$.} \]
If $M = \sum_{i=0}^r p_i(x) \partial^i_x$, then
\[ \sigma_\alpha = \min_{0\leq i\leq r} (\operatorname{ord}_\alpha p_i
- i)
  \quad\text{and}\quad \sigma_\infty = \max_{0\leq i\leq r}(i-\deg p_i). \]

For any~$\alpha\in\calg$ that is not a root of the leading coefficient $p_r$
of~$M$, we have $\ind_\alpha(s) =  p_r(\alpha) \cdot
s(s-1)\dotsm(s-r+1)$ and~$\sigma_\alpha = -r$.

\subsection{Weak Hermite Reduction}
\label{sec:weak-herm-reduct}

Let~$\im M = M(\KK(x))$.
Let~$H_\alpha : \calg(x) \to \calg(x)$
be the \emph{local reduction map at~$\alpha$} defined
by~$H_\alpha(R) = R$ if~$\ord_\alpha R \geq 0$ ($\alpha$ is not a pole of~$R$)
and by induction on~$\ord_\alpha R$,
\begin{equation*}
  H_\alpha(R) =
  \begin{cases}
    H_\alpha\!\left( R - \frac{cM( (x-\alpha)^{-s-\sigma_\alpha})}{\ind_\alpha(-\sigma_\alpha - s)}  \right) & \hfill \text{\llap{if $\ind_\alpha(-\sigma_\alpha - s) \neq 0$,}}  \\
    c(x-\alpha)^{-s} + H_\alpha\!\left( R - c(x-\alpha)^{-s} \right) & \quad\quad\text{otherwise,}
  \end{cases}
\end{equation*}
where $R = c (x-\alpha)^{-s} \left( 1+ o(1) \right)$ as~$x\to\alpha$,
with~$c\in\calg\setminus \{0\}$
and~$s > 0$.
The induction is well-founded because in either case of the definition,
the argument of~$H_\alpha$ in the right-hand side has a valuation at~$\alpha$ that is larger
than~$\ord_\alpha R$.
By construction, we check that
$R - H_\alpha(R) \in \im M$ for any~$R\in \KK(x)$.

Similarly, let~$H_\infty : \KK(x) \to \KK(x)$ be the \emph{local reduction map at~$\infty$} defined by
$H_\infty(R) = R$ if~$\ord_\infty R > 0$ (that is~$R_{(\infty)} = 0$) and by
induction on~$\ord_\infty(R)$ by
\begin{equation*}
  H_\infty(R) =
  \begin{cases}
    H_\infty\!\left( R- \frac{cM( x^{s+\sigma_\infty} )}{\ind_\infty(-s-\sigma_\infty)} \right) & \text{\parbox{3cm}{\raggedleft if $\ind_\infty(-s-\sigma_\infty) \neq 0$\\
        and $s+\sigma_\infty \geq 0$,}} \\
    c x^s + H_\infty\!\left( R - c x^s \right) & \hfill\text{otherwise,}
  \end{cases}
\end{equation*}
where~$R = cx^s \left(1 + o(1) \right)$ as~$x\to\infty$.
By construction, we check that
$R - H_\infty(R) \in \im M$ for any~$R\in \KK(x)$.
The condition~$s+\sigma_\infty \geq 0$ ensures that~$M(x^{s+\sigma_\infty})$ is a
polynomial.

\begin{definition}
  The \emph{weak Hermite reduction} is the linear map~$H$,
  seen either as~$H :\KK(x)\to\KK(x)$ or as~$H:\calg(x)\to\calg(x)$,
  and defined by
\[ H(R) = H_\infty \bigg( R_{(\infty)} + \sum_{\alpha\in \left\{ \text{poles of~$R$} \right\}} H_\alpha \big( R_{(\alpha)} \big) \bigg). \]
\end{definition}

\begin{proposition}\label{prop:hermite-reduced}
  For any~$R\in \KK(x)$:
  \begin{enumerate}[(i)]
  \item\label{item:H:altdef} $H(R) = H_\infty \circ H_{\alpha_1} \circ \dotsb
    \circ H_{\alpha_n} (R)$, where~$\alpha_1,\dotsc,\alpha_n \in \calg$ are the poles of~$R$;
  \item\label{item:H:equiv-mod-L} $R - H(R) \in \im M$ and $H(M(R)) \in \im M$;
  \item\label{item:H:idempotent} $H(H(R)) = H(R)$.
  \end{enumerate}
  Moreover:
  \begin{enumerate}[(i), resume]
  \item\label{item:H:reduceTa} for any~$\alpha \in \calg$ and for any~$s > 0$,
    \[\ind_\alpha(s) \neq 0  \text{ and }  \sigma_\alpha - s > 0\Rightarrow H\!\left(M((x-\alpha)^{-s})\right) = 0; \]
  \item\label{item:H:reduceinf} for any $s \geq 0$,
    $\displaystyle \ind_\infty(s) \neq 0 \Rightarrow H\!\left(M(x^s)\right) = 0$.
  \end{enumerate}
\end{proposition}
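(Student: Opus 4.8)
The plan is to reduce all five assertions to a single structural lemma about the local maps, and then assemble them. I would first record three properties of each $H_\alpha$, proved simultaneously by induction on the pole order at~$\alpha$: (a) $H_\alpha$ is $\calg$-linear and acts as the identity on functions regular at~$\alpha$; (b) $H_\alpha(R)-R$ always lies in $\im M$; and (c)—the crucial point—$H_\alpha(R)-R$ has no finite pole outside~$\alpha$. Property~(c) is exactly where the hypothesis $M\in\KK[x]\langle\partial_x\rangle$ (polynomial, not merely rational, coefficients) is used: for every~$k$, $M((x-\alpha)^{-k})$ is a $\KK[x]$-combination of the $(x-\alpha)^{-k-i}$, so its only finite pole is at~$\alpha$, and hence so is the only finite pole of each quantity subtracted along the recursion. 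The corresponding statement for $H_\infty$ is that $H_\infty(R)-R$ is a \emph{polynomial}, which is precisely why its recursion only ever subtracts $M(x^{s+\sigma_\infty})$ with $s+\sigma_\infty\ge 0$. Finally I would record the shape of the images: $H_\alpha(R)$ keeps, at~$\alpha$, only the "retained" powers $(x-\alpha)^{-s}$ for which the relevant indicial value vanishes, and $H_\infty(R)$ keeps only the analogously retained degrees at infinity.

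Granting property~(c), assertion~(\ref{item:H:altdef}) is bookkeeping. Writing $R=R_{(\alpha_n)}+(R-R_{(\alpha_n)})$ and using~(a), one gets $H_{\alpha_n}(R)=H_{\alpha_n}(R_{(\alpha_n)})+(R-R_{(\alpha_n)})$; by~(c) this leaves the polar parts at $\alpha_1,\dots,\alpha_{n-1}$ untouched and only alters the polynomial part. Iterating, $H_{\alpha_1}\circ\dots\circ H_{\alpha_n}$ carries $R$ to $R_{(\infty)}+\sum_i H_{\alpha_i}(R_{(\alpha_i)})$, which is symmetric in the~$\alpha_i$ and is exactly the argument of $H_\infty$ in the definition of~$H$; applying $H_\infty$ yields $H(R)$. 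Assertion~(\ref{item:H:equiv-mod-L}) is then immediate from~(b): each factor of this composition changes its input by an element of~$\im M$, so a telescoping sum gives $R-H(R)\in\im M$; and since $M(R)\in\im M$, applying the same to $M(R)$ gives $H(M(R))\in\im M$.

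For idempotence~(\ref{item:H:idempotent}) I would first establish that each local map is idempotent, using the image description: if the polar part at~$\alpha$ of an input consists only of retained powers, then every step of $H_\alpha$ takes the second ("otherwise") branch, which copies those terms verbatim and subtracts nothing, so $H_\alpha$ fixes such an input; likewise for $H_\infty$. It then remains to note that the three families of data do not interfere: the final $H_\infty$ in~$H$ subtracts only polynomials, hence does not disturb the finite polar parts of $\sum_i H_{\alpha_i}(R_{(\alpha_i)})$, which are already of retained type. Consequently, feeding $H(R)$ back into~$H$ finds every polar part, finite and at infinity, already retained, so nothing moves and $H(H(R))=H(R)$.

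Lastly, (\ref{item:H:reduceTa}) and~(\ref{item:H:reduceinf}) say that $H$ annihilates $M$ applied to a single monomial, and the mechanism is that such an $M(\text{monomial})$ is, up to a nonzero scalar, precisely the quantity removed in one step of the first branch of a local reduction—the nonvanishing of the indicial polynomial being exactly the condition that puts us in that branch. For~(\ref{item:H:reduceinf}) this is transparent: $M(x^s)$ is a polynomial, so $H(M(x^s))=H_\infty(M(x^s))$, and one step of $H_\infty$ cancels it because $M(x^s)$ is the very term $M(x^{s'+\sigma_\infty})$ attached to its own leading degree $s'=s-\sigma_\infty$. For~(\ref{item:H:reduceTa}) the shift hypothesis $\sigma_\alpha-s>0$ first forces $M((x-\alpha)^{-s})$ to be regular at~$\alpha$—indeed a polynomial, by property~(c)—so once more $H=H_\infty$ on it, and the indicial hypothesis drives the reduction at infinity to~$0$. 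The main obstacle I anticipate is the structural lemma, specifically pairing linearity with confinement~(c): one must check that the branch chosen at each step depends only on the fixed indicial data and the current pole order, so that $H_\alpha$ is genuinely linear and polar parts may be peeled off one place at a time. Once that package is secured, (\ref{item:H:altdef})–(\ref{item:H:idempotent}) are routine, and (\ref{item:H:reduceTa})–(\ref{item:H:reduceinf}) come down to verifying that the indicial conditions select the cancelling branch, a routine but sign-sensitive computation with the shift relations.
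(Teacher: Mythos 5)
Your treatment of parts~(\ref{item:H:altdef})--(\ref{item:H:idempotent}) and~(\ref{item:H:reduceinf}) is sound and follows essentially the paper's route: the confinement property (your~(c)) that $H_\alpha(R)-R$ lies in $\im M$ and has no finite pole outside~$\alpha$ is exactly what makes the composition formula and the idempotence work, and your one-step cancellation at infinity for~(\ref{item:H:reduceinf}) is the intended mechanism.

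Part~(\ref{item:H:reduceTa}) is where the proposal breaks down. You read the hypothesis $\sigma_\alpha-s>0$ literally, conclude (correctly, under that reading) that $M((x-\alpha)^{-s})$ is a polynomial, and then assert that ``the indicial hypothesis drives the reduction at infinity to~$0$.'' That step is a non sequitur: the hypothesis concerns $\ind_\alpha$, which gives no control over $H_\infty$ applied to this polynomial. In fact, for $0<s\le\sigma_\alpha$ these polynomials are precisely the type-(b) generators of the exceptional set in Proposition~\ref{prop:generators-exc}, and they are generically nonzero. The paper's own example is a counterexample to the literal statement: for $M=x^{10}\partial_x$ one has $\sigma_0=9$ and $\ind_0(u)=u$, so with $\alpha=0$, $s=1$ both hypotheses hold, yet $H(M(x^{-1}))=H_\infty(-x^{8})=-x^{8}\neq0$. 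The inequality in the statement must therefore be read as $s-\sigma_\alpha>0$ (with the indicial condition at $-s$), i.e.\ the case where $M((x-\alpha)^{-s})$ genuinely has a pole at~$\alpha$; this is also the only reading compatible with Proposition~\ref{prop:generators-exc}. Under that reading the mechanism is local at~$\alpha$, not at infinity: $M((x-\alpha)^{-s})$ has order $-s+\sigma_\alpha<0$ at~$\alpha$ with leading coefficient $\ind_\alpha(-s)$, the first branch of $H_\alpha$ is selected (its test evaluates $\ind_\alpha$ precisely at $-s$), and the term it subtracts is $M((x-\alpha)^{-s})$ itself, so $H_\alpha$ returns~$0$ in one step and $H=H_\infty(0)=0$. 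As written, your argument for~(\ref{item:H:reduceTa}) proves a false statement by an invalid step; it needs to be replaced by this one-step cancellation at~$\alpha$.
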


\begin{proof}
  By linearity and Equation~\eqref{eq:partialfraction},
  Property~\ref{item:H:altdef} follows from the formulas~$H(R_{(\infty)}) =
  H_\infty ( R_{(\infty)} )$ and~$H(R_{(\alpha)}) =
  H_\infty(H_\alpha(R_{(\alpha)}))$ derived from the definition of~$H$.
  The first part of Property~\ref{item:H:equiv-mod-L} follows
  from corresponding properties for~$H_\alpha$ and~$H_\infty$;
  the second part is a consequence of applying the first to~$M(R)$.

  As for the idempotence,
  we observe, first, that every~$H(R)$ is a linear combination of
  some~$(x-\alpha)^{-s}$, with~$\ind_\alpha(-s-\sigma_\alpha)=0$,
  and~$x^s$, with~$s + \sigma_\infty \geq 0$
  and~$\ind_\infty(-s-\sigma_\infty)=0$;
  and second, that~$H$ is the identity on such monomials.

  As for~\ref{item:H:reduceTa}, the condition
   $\ind_\alpha(s) \neq 0$ together with \eqref{eq:1} imply
  that~$\ord_\alpha M((x-\alpha)^{-s}) = -s-\sigma_\alpha$,
  and then by definition of~$H_\alpha$,
  \[ H_\alpha\!\left( M((x-\alpha)^{-s} )\right) = H_\alpha\!\left(
      M((x-\alpha)^{-s}) - M((x-\alpha)^{-s} )\right) = 0. \]
  The last property is proved similarly.
\end{proof}

\subsection{Canonical Form}
\label{sec:canonical-form}
If $H$ were a canonical form, $H(M(R))$ would be~0 for any~$R\in\KK(x)$.
But this property fails, and more work is required to refine $H$ into a
canonical form.
\begin{definition}
  The space $\Exc$ of \emph{exceptional functions} is the $\KK$-linear
  subspace of $\KK(x)$ defined by $\Exc = H(\im M)$.
\end{definition}

\begin{lemma}\label{lem:exc}
  For any~$R\in \KK(x)$,
  $R \in \im M$ if and only if $H(R) \in \Exc$.
\end{lemma}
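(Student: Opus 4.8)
The plan is to reduce the statement to the single inclusion $\Exc \subseteq \im M$, after which both directions follow from the defining properties of $H$. The forward implication is immediate: if $R \in \im M$, then by the very definition $\Exc = H(\im M)$ we have $H(R) \in H(\im M) = \Exc$, with no appeal to Proposition~\ref{prop:hermite-reduced} needed.

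For the converse, the key observation is that $\Exc$ is in fact contained in $\im M$. To see this, recall that $\im M = M(\KK(x))$ is the image of the $\KK$-linear operator $M$, hence a $\KK$-linear subspace of $\KK(x)$ and in particular closed under subtraction. Take any $S \in \im M$. Part~(\ref{item:H:equiv-mod-L}) of Proposition~\ref{prop:hermite-reduced} gives $S - H(S) \in \im M$, so $H(S) = S - (S - H(S))$ is a difference of two elements of $\im M$ and therefore lies in $\im M$ itself. As $S$ was arbitrary, $\Exc = H(\im M) \subseteq \im M$.

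With this inclusion in hand the converse is a one-line computation. Suppose $H(R) \in \Exc$. Then $H(R) \in \im M$ by the inclusion just established, and part~(\ref{item:H:equiv-mod-L}) of Proposition~\ref{prop:hermite-reduced} again supplies $R - H(R) \in \im M$; adding these two members of the subspace $\im M$ yields $R = (R - H(R)) + H(R) \in \im M$, as required.

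I expect no genuine obstacle here: the entire argument is formal manipulation inside the linear subspace $\im M$, and the only non-tautological step---that $H$ maps $\im M$ into itself---is an immediate consequence of $R - H(R) \in \im M$ together with the linearity of $\im M$. The role of $\Exc$ is precisely to package the defect that prevents $H$ from already being a canonical form, namely the failure of $H(M(R))$ to vanish, and this lemma confirms that testing membership in $\im M$ amounts to testing whether $H(R)$ falls into that exceptional package.
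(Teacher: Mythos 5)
Your proof is correct and takes essentially the same route as the paper's: both directions rest on Proposition~\ref{prop:hermite-reduced}~\ref{item:H:equiv-mod-L} together with the linearity of~$\im M$. The paper phrases the converse by picking $U$ with $H(R)=H(M(U))$ and applying \ref{item:H:equiv-mod-L} to $R-M(U)$, while you first isolate the inclusion $\Exc\subseteq\im M$ (which is just the second half of \ref{item:H:equiv-mod-L}) and then add; this is the same computation rearranged.
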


\begin{proof}
  The direct implication is the definition of~$\Exc$.
  For the converse, assume~$H(R) = H(M(U))$ for some~$U$.
  As $(R-M(U)) - H(R-M(U)) = M(V)$ for some~$V$ by Prop.~\ref{prop:hermite-reduced}~\ref{item:H:equiv-mod-L},
   $R = M(U+V)$.
\end{proof}

The generalized Hermite reduction is not a canonical form, but it is strong
enough to ensure that~$\Exc$ is finite-dimensional over~$\KK$.

\begin{proposition}\label{prop:generators-exc}
  Over\/ $\calg$, the vector space\/ $\Exc$ is generated by the finite family
  \begin{enumerate}[(a)]
  \item $H(M (  (x-\alpha)^{-s} ))$ with $\alpha \in \Sing(M)$,
    $s > 0$ and $\ind_\alpha(-s) = 0$,
  \item $H(M (  (x-\alpha)^{-s} ))$ with $\alpha \in \Sing(M)$,
    $0 < s \leq \sigma_\alpha$,
  \item $H(M(x^s))$ with~$s \geq 0$ and $\ind_\infty(-s) = 0$,
  \end{enumerate}
  where\/~$\Sing(M) \subset \calg$ is the set of singularities of~$M$ (the zeroes
  of its leading coefficient).
\end{proposition}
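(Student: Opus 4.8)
The plan is to begin from the tautological generating set of $\Exc = H(\im M)$ and pare it down to the three listed families using the vanishing criteria of Proposition~\ref{prop:hermite-reduced}. Since $M$ and $H$ are both $\calg$-linear, $\Exc = H(M(\calg(x)))$ is spanned over $\calg$ by the images $H(M(B))$ of any $\calg$-spanning set $B$ of $\calg(x)$. Partial fraction decomposition~\eqref{eq:partialfraction} furnishes such a spanning set: the monomials $(x-\alpha)^{-s}$ with $\alpha\in\calg$, $s>0$, together with the monomials $x^s$ with $s\geq 0$. Hence $\Exc$ is spanned by the a priori infinite family of all $H(M((x-\alpha)^{-s}))$ and all $H(M(x^s))$, and it remains to discard the superfluous generators.

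First I would remove the generators that vanish outright. At a finite place, part~\ref{item:H:reduceTa} of Proposition~\ref{prop:hermite-reduced} gives $H(M((x-\alpha)^{-s}))=0$ as soon as $-s$ is not a root of $\ind_\alpha$ and $s$ lies above the shift $\sigma_\alpha$; so a finite-place generator can survive only when $\ind_\alpha(-s)=0$ (family~(a)) or $0<s\leq\sigma_\alpha$ (family~(b)). At infinity, part~\ref{item:H:reduceinf} gives $H(M(x^s))=0$ whenever $-s$ is not a root of $\ind_\infty$, leaving precisely the generators with $\ind_\infty(-s)=0$, which is family~(c). It then suffices to note that only $\alpha\in\Sing(M)$ can contribute: for $\alpha$ not a root of the leading coefficient $p_r$, the local data of~\S\ref{sec:shifts-indic-equat} give $\sigma_\alpha=-r$ and $\ind_\alpha(s)=p_r(\alpha)\,s(s-1)\dotsm(s-r+1)$, whose roots are $0,\dots,r-1$; thus $\ind_\alpha(-s)\neq 0$ for all $s>0$ while $s>0>\sigma_\alpha$, so both survival conditions fail and every regular-place generator vanishes.

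Finiteness of the three remaining families is then immediate: $\Sing(M)$ is finite, being the zero set of $p_r$; for each such $\alpha$ the indices of type~(a) lie among the finitely many integer roots of $\ind_\alpha$ and those of type~(b) satisfy $s\leq\sigma_\alpha$; and type~(c) ranges over the finitely many nonnegative integer roots of $\ind_\infty$. Collecting the surviving generators yields exactly families~(a), (b) and~(c), which therefore generate $\Exc$ over $\calg$.

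The step I expect to require the most care is matching the vanishing criteria to the complement of families~(a)--(c): one must check that Proposition~\ref{prop:hermite-reduced}\ref{item:H:reduceTa}--\ref{item:H:reduceinf} annihilate exactly the monomials lying outside these families, keeping careful track of the shift $\sigma_\alpha$ and of the sign of $s$ inside the indicial polynomial. In particular, at a regular finite pole one should confirm that $H_\alpha$ genuinely subtracts off all of $M((x-\alpha)^{-s})$ and that the residual polynomial part is then annihilated by $H_\infty$, so that the composite $H$ of Proposition~\ref{prop:hermite-reduced}\ref{item:H:altdef} indeed returns zero.
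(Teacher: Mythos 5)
Your proof is correct and follows essentially the same route as the paper's: express $\Exc=H(\im M)$ as the span of the images $H(M(\cdot))$ of the partial-fraction monomial basis, discard the generators annihilated by Proposition~\ref{prop:hermite-reduced}~\ref{item:H:reduceTa} and~\ref{item:H:reduceinf}, and note that regular points contribute nothing because there $\sigma_\alpha=-r$ and $\ind_\alpha(-s)\neq 0$ for all $s>0$. The only (harmless) divergence is that you read the hypothesis of part~\ref{item:H:reduceTa} as $s>\sigma_\alpha$ together with $\ind_\alpha(-s)\neq 0$ --- which is indeed the condition needed so that the surviving generators are exactly the complement, families (a) and (b) --- whereas the printed statement carries the opposite sign.
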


\begin{proof}
  The elements $(x-\alpha)^{-s}$ ($\alpha\in\calg$, $s > 0$) and~$x^s$ ($s\geq
  0$) form a basis of~$\calg(x)$. In particular, $\Exc$, by definition, is
  generated by the $H((x-\alpha)^{-s})$ and~$H(x^s)$. By
  Proposition~\ref{prop:hermite-reduced}~\ref{item:H:reduceTa}
  and~\ref{item:H:reduceinf}, $H( M((x-\alpha)^{-s}) ) = 0$ when
  $\ind_\alpha(-s) \neq 0$ and $s < \sigma_\alpha$. Similarly, $H(M(x^s)) = 0$
  when~$\ind_\infty(-s) \neq 0$. Moreover, any~$\alpha\in \calg$ such
  that~$\ind_\alpha$ has a negative root or~$\sigma_\alpha > 0$ is a
  singularity of~$M$. Therefore, the only nonzero generators of~$\Exc$ belong
  to the set given in the statement.
\end{proof}

\begin{example}
  Let~$M = x^{10} \partial_x$. We compute $\ind_\alpha(s) =
  - \alpha^{10} s$
  for
  any~$\alpha \in \calg$ and~$\ind_\infty(s) = s$.
  Moreover~$\sigma_\alpha = -1$ for~$\alpha \not\in \{ 0, \infty\}$, $\sigma_0 = 9$
  and~$\sigma_\infty = -9$.
  It follows that
  \[ \Exc = \operatorname{Vect} \left\{ H(M(x^{-9})), \dotsc, H(M(x^{-1})) \right\} = \operatorname{Vect} \left\{ 1, x, \dotsc, x^8 \right\}. \]
\end{example}

\begin{lemma}\label{lem:modulo-finsubspace}
  Given a finite-dimensional $\KK$-linear subspace~$W\subset \KK(x)$,
  there is a unique idempotent linear map~$\rho_W : \KK(x) \to \KK(x)$
  such that:
  \begin{enumerate*}[(i)]
  \item $W = \ker \rho_W$;
  \item for any~$R\in \KK(x)$, the degree of the numerator of~$\rho_W(R)$ is
    minimal among all~$S\in\KK(x)$ with~$R - S \in W$.
  \end{enumerate*}
\end{lemma}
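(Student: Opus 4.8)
The plan is to realize $\rho_W$ as reduction to a normal form modulo $W$, computed by linear algebra over $\KK[x]$, and to show that (i)--(ii) pin down exactly this normal form. The first step is to turn ``degree of the numerator'' into a genuine degree filtration. Writing $R = P/Q$ in lowest terms and letting $q$ be a common denominator of a basis of $W$, I set $D = \operatorname{lcm}(Q,q)$; then every $S$ with $R - S \in W$ lies in $D^{-1}\KK[x]$, and the map $S \mapsto DS$ identifies the coset $R + W$ with an affine subspace $N_R + \widetilde{W}$ of $\KK[x]$, where $\widetilde{W} = D\cdot W$ is a fixed finite-dimensional space of polynomials and $N_R = (D/Q)P$. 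Minimizing the numerator degree of $S$ thus becomes minimizing $\deg(N_R - \widetilde{n})$ over $\widetilde{n}\in\widetilde{W}$, an ordinary problem about polynomials. For two inputs one takes a common $D$, so that these identifications are compatible and the resulting map will be linear.

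For existence I would reduce $R$ against the degree-staircase of $W$. Gaussian elimination on $\widetilde{W}$ produces a basis with strictly decreasing leading degrees, whose leading exponents form a finite set $\Lambda\subset\mathbb{Z}_{\ge 0}$; subtracting suitable $\KK$-multiples of these basis elements top-down cancels from $N_R$ every monomial with exponent in $\Lambda$ and leaves a unique representative $\rho_W(R)$ whose numerator avoids $\Lambda$. This reduction is $\KK$-linear and idempotent (it fixes numerators already avoiding $\Lambda$); its kernel is exactly $W$, since a function reduces to $0$ iff its cleared numerator lies in $\widetilde{W}$, i.e.\ iff it lies in $W$; and it uses only $\KK$-linear algebra, so no algebraic extension of $\KK$ intervenes, as required by the rationality aim of the paper. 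I would also check that rescaling $D$ multiplies $N_R$, $\widetilde{W}$ and $\Lambda$ uniformly and leaves the resulting rational function unchanged, so that $\rho_W$ is well defined on all of $\KK(x)$ independently of the auxiliary choices.

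The representative so produced does attain the minimal numerator degree, since once all $\Lambda$-monomials are removed no element of $\widetilde{W}$ can lower the leading term further; this gives (ii). Uniqueness is the delicate point and I expect it to be the main obstacle. Minimal degree \emph{alone} does not single out a representative: when $\Lambda$ has gaps, an entire affine family attains it (already for $W = \KK x$, every $x^2 + ax$ has degree $2$, and both $x^2$ and $x^2+x$ extend to valid idempotent linear maps of minimal degree). The condition (ii) must therefore be read as selecting the \emph{reduced} representative, namely the one whose numerator avoids the staircase $\Lambda$; this element is genuinely unique in its coset, because any nonzero difference of two reduced numerators would lie in $\widetilde{W}$ yet avoid $\Lambda$, contradicting that $\Lambda$ collects all leading degrees of $\widetilde{W}$. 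Granting this reading, uniqueness of $\rho_W$ is immediate. The remaining work, and the crux, is thus twofold: to argue cleanly that the intended notion of ``minimal numerator'' is this normal form rather than bare degree, and to verify that the common-denominator bookkeeping glues the per-coset reductions into a single globally $\KK$-linear, idempotent projection of $\KK(x)$ with kernel $W$.
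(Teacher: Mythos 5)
Your overall route (clear denominators to reduce to a space of polynomials, then eliminate against the degree staircase of $W$) is exactly the paper's, and your observation about uniqueness is a genuine and valuable one: condition~(ii) read literally does \emph{not} single out a representative (your $W=\KK x$ example is correct --- both the projection killing the coefficient of $x$ and the one with image containing $x^2+x$ are idempotent, linear, have kernel $W$, and return numerators of minimal degree), so the lemma must be understood as selecting the Gaussian-elimination normal form whose numerator support avoids the staircase. The paper's proof says nothing about uniqueness, so you have, if anything, been more careful than the source on this point.

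The genuine gap is in your gluing step. You claim that rescaling $D$ ``multiplies $N_R$, $\widetilde W$ and $\Lambda$ uniformly and leaves the resulting rational function unchanged''; this is false, because reduction against the staircase does not commute with multiplication by a polynomial $g$: if $N-w$ avoids $\Lambda$, the product $g(N-w)$ need not avoid the staircase $\Lambda'$ of $g\widetilde W$. Concretely, take $W=\KK x$ and $R=1+x^2$. With $D=1$ your recipe returns $1+x^2$; with $D=x+1$ one has $\widetilde W=\KK(x^2+x)$, $\Lambda=\{2\}$, $N_R=x^3+x^2+x+1$, which reduces to $x^3+1$, giving $(x^3+1)/(x+1)=x^2-x+1$. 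The two outputs differ by $-x\in W$, so the map depends on the auxiliary denominator, and with your per-input choice $D=\operatorname{lcm}(Q,q)$ linearity breaks: $\rho\bigl(1+x^2\bigr)+\rho\bigl(\tfrac1{x+1}\bigr)=\tfrac{x^3+x^2+x+2}{x+1}$ while $\rho\bigl(1+x^2+\tfrac1{x+1}\bigr)=\tfrac{x^3+2}{x+1}$. The fix --- which is what the paper does --- is to make the clearing polynomial depend only on $W$, not on $R$: write $W=Q^{-1}V$ with $V\subset\KK[x]$ and $Q$ fixed once and for all, define $\rho_V$ on all of $\KK(x)$ as the identity on the proper part plus staircase reduction of the polynomial part, and set $\rho_W(R)=Q^{-1}\rho_V(QR)$. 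Then linearity, idempotence and $\ker\rho_W=W$ are immediate, and your staircase argument gives~(ii).
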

\noindent The following proof gives an algorithm for computing~$\rho_W$.
\begin{proof}
  When~$W \subset \KK[x]$, the value~$\rho_W(R)$ is the result of Gaussian
  elimination applied in the monomial basis to the polynomial part of~$R$ with the elements of~$W$. %

  In the general case, we write~$W = Q^{-1} V$, for some subspace~$V\subset \KK[x]$
  and~$Q\in\KK[x]$,
  and define~$\rho_W(R) = Q^{-1} \rho_V( QR )$.
  The two properties are easily checked.
\end{proof}

\begin{definition}
  The \emph{generalized Hermite reduction with respect to $M$} is the map $[\ ] :\KK(x)\to \KK(x)$
  defined by $[R] = \rho_{\,\Exc} ( H(R) )$.
\end{definition}

\begin{theorem}\label{thm:canonical}
  The map\/~$[\ ]$ is a canonical form with respect to $M$.
\end{theorem}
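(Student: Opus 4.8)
The plan is to verify the two defining axioms of Definition~\ref{def:canonical} for the map $[R] = \rho_{\,\Exc}(H(R))$, relying on the structural facts already established for $H$, for $\Exc$, and for the idempotent projector $\rho_W$. Linearity of $[\ ]$ is immediate, since it is a composition of the linear maps $H$ and $\rho_{\,\Exc}$, so the whole burden rests on (i) $[M(R)]=0$ and (ii) $R-[R]\in\im M$.

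For axiom~(i), I would argue as follows. By Proposition~\ref{prop:hermite-reduced}~\ref{item:H:equiv-mod-L}, the element $H(M(R))$ lies in $H(\im M)$, which is exactly the exceptional space $\Exc$ by definition. Since $\Exc=\ker\rho_{\,\Exc}$ by Lemma~\ref{lem:modulo-finsubspace}~(i), we get $[M(R)]=\rho_{\,\Exc}(H(M(R)))=0$. This is the quick direction.

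Axiom~(ii) is where the real content lies, and it is the step I expect to be the main obstacle, because it requires combining two different ``$\bmod\,\im M$'' relations into one. I would start from the decomposition $R-[R]=\bigl(R-H(R)\bigr)+\bigl(H(R)-\rho_{\,\Exc}(H(R))\bigr)$. The first summand lies in $\im M$ directly by Proposition~\ref{prop:hermite-reduced}~\ref{item:H:equiv-mod-L}. For the second summand, set $S=H(R)$; the difference $S-\rho_{\,\Exc}(S)$ lies in $\ker\rho_{\,\Exc}=\Exc$ because $\rho_{\,\Exc}$ is idempotent, so $\rho_{\,\Exc}\bigl(S-\rho_{\,\Exc}(S)\bigr)=0$. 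The crucial point is then that $\Exc\subset\im M$: indeed $\Exc=H(\im M)$, and for any $M(U)\in\im M$ we have $H(M(U))-M(U)\in\im M$ again by Proposition~\ref{prop:hermite-reduced}~\ref{item:H:equiv-mod-L}, so $H(M(U))\in\im M$ as well. Hence $H(R)-\rho_{\,\Exc}(H(R))\in\Exc\subset\im M$, and adding the two summands gives $R-[R]\in\im M$, as desired.

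The only subtlety to state carefully is that $\rho_{\,\Exc}$ is applied here to $\Exc$ as a subspace of $\KK(x)$ rather than of $\calg(x)$; Proposition~\ref{prop:generators-exc} guarantees $\Exc$ is finite-dimensional, which is precisely the hypothesis needed to invoke Lemma~\ref{lem:modulo-finsubspace} and produce $\rho_{\,\Exc}$. With $\Exc\subset\im M$ established and the idempotence of $\rho_{\,\Exc}$ in hand, both axioms follow with no further computation, completing the proof that $[\ ]$ is a canonical form.
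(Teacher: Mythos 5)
Your proof is correct and follows essentially the same route as the paper's: axiom (i) via $H(M(R))\in\Exc=\ker\rho_{\,\Exc}$, and axiom (ii) via the decomposition $R-[R]=\bigl(R-H(R)\bigr)+\bigl(H(R)-\rho_{\,\Exc}(H(R))\bigr)$ together with $\Exc\subset\im M$. You merely spell out a few steps the paper leaves implicit (the idempotence argument for $H(R)-\rho_{\,\Exc}(H(R))\in\Exc$, and the derivation of $\Exc\subset\im M$ from Proposition~\ref{prop:hermite-reduced}); no gap.
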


\begin{proof}
  We check the properties of Definition~\ref{def:canonical}.
  Let~$R\in\KK(x)$. First, $[M(R)] = 0$ because $H(M(R))\in \Exc$
  (Lemma~\ref{lem:exc}) and then $\rho_{\Exc}(H(M(R))) = 0$,
  by Proposition~\ref{prop:hermite-reduced}~\ref{item:H:equiv-mod-L}
  and the construction of~$\rho_{\Exc}$.
  Second, $R - [R] \in \im M$ because~$R - H(R) \in \im M$
  (Proposition~\ref{prop:hermite-reduced})
  and~$H(R) - \rho_{\Exc}(H(R)) \in \Exc \subset \im M$.
\end{proof}

\subsection{Rational Generalized Hermite Reduction}
\label{sec:comp-an-herm}

\begin{algo}
\begin{algorithmic}
  \Function{WHermiteRed}{$R$, $M$}
  \If{$R = 0$} \Return $0$
  \ElsIf{$R$ is a polynomial}
  \State write~$R$ as~$cx^s+ \text{(lower degree terms)}$
  \If{$\ind_\infty(-s-\sigma_\infty) \neq 0$ and~$s+\sigma_\infty \geq 0$}
  \State \Return $\textsf{WHermiteRed} \left( R - \frac{c
      M(x^{s+\sigma_\infty})}{\ind_\infty(-s-\sigma_\infty)}, M \right)$
  \Else\
   \Return $c x^s + \textsf{WHermiteRed}(R - cx^s, M)$
  \EndIf
  \Else
  \State $P \gets$ an irreducible factor of the denominator of~$R$.
  \State write~$R$ as~$\frac{A}{P^s Q}$, with~$A, Q \in \KK[x]$ and $s$ maximal.
  \If {$\operatorname{ind}_P(-s-\sigma_P) = 0$}
  \State $U \gets A/Q \mod P$.
  \State \Return $U/P^s + \textsf{WHermiteRed}(R-U/P^s, M)$
  \Else
  \State $R \gets A/Q/\ind_P(-s-\sigma_P) \mod P$
  \State \Return
  $\textsf{WHermiteRed}\left(R-M(R/P^{s+\sigma_P}),
  M\right)$
  \EndIf
  \EndIf
  \EndFunction
\end{algorithmic}
\caption[]{Rational weak Hermite reduction.
  \begin{description}
  \item[Input] $R\in \KK(x)$; $M$ a linear differential operator.
  \item[Output] The rational weak Hermite reduction of~$R$.
  \end{description}
\label{algo:HermiteRedRat}
}
\end{algo}

In most cases, computing Hermite reduction
as it is defined above would require  to work with algebraic extensions of the base field.
If $P\in \KK[x]$ is a monic irreducible polynomial and $\alpha$ a root of $P$,
the reduction can be performed simultaneously at all roots of~$P$ without
introducing algebraic extensions.

{The indicial equation} is obtained by
considering the leading coefficient of the $P$-adic expansion of $M(P^{-s})$,
see~\cite[\S4.1, p.~107]{PuSi03}.
More precisely,
there is a unique
polynomial~$\ind_P(s)$ with coefficients in~$\KK[x]/(P)$ and a unique
integer~$\sigma_P$ such that for any~$s > 0$,
\[ M\!\big( P^{-s} \big) = \ind_P(-s) P^{-s+\sigma_P} + O\!\big(
    P^{-s+\sigma_P+1} \big), \]
as $P$-adic expansions.
Since~$P$ is irreducible, $\ind_P(s)$, for a given~$s$, is either~$0$ or
invertible modulo~$P$.
For an irreducible polynomial $P\in \KK[x]$, and for~$R = U P^{-s} +
O(P^{-s+1})$, we define
\begin{equation*}
   H_P(R) =
  \begin{cases}
    UP^{-s} + H_P( R - UP^{-s} ) &\hfill \llap{\text{if $\ind_P(-\sigma_P - s) = 0$,}}  \\
    H_P\!\left( R - {M( U \ind_\alpha(-\sigma_P - s)^{-1} P^{-s-\sigma_P})} \right) & \text{otherwise,}
  \end{cases}
\end{equation*}
where~$\ind_\alpha(-\sigma_P - s)^{-1}$ is computed mod~$P$.
This is the part of our reduction which most closely resembles
the original Hermite reduction, with successive coefficients obtained by
modular inversions.

\begin{definition}
  The \emph{rational weak Hermite reduction} is the linear map~$H_\mathrm{rat} : \KK(x)\to\KK(x)$,
  defined by
\[ H_\mathrm{rat}(R) = H_\infty \big( R_{(\infty)} + \smash{\sum_{P}} H_P \big( R_{(P)}
  \big) \big),\]
where the summation runs over the irreducible factors of the denominator of~$R$
and~$R_{(P)} \in \KK[x,P^{-1}]$ denotes the polar part of the $P$-adic expansion of~$R$.
\end{definition}

The maps~$H$ and $H_\mathrm{rat}$ satisfy the same properties, \emph{mutatis mutandis}.
In particular, the latter can be used to compute a canonical form in the same
way as~$H$. Yet, both reductions are not equal (see also \S \ref{sec:absol-herm-reduct}).
For example, over~$\QQ$ with~$M = (x^2+1)\partial_x + 10x$,
 $R = (x^2+1)^{-5}$ and~$i^2+1 = 0$,
\[ H(R) = \tfrac{i}{32} \big( (x+i)^{-5} - (x-i)^{-5} \big) \quad \text{whereas}
  \quad H_\mathrm{rat}(R) = R. \]

Partial fraction decomposition and actual Hermite
reduction can be performed together. This is described in Algorithm~\ref{algo:HermiteRedRat}.
Together with the algorithm for the map~$\rho_{\Exc}$,
described in the proof of Lemma~\ref{lem:modulo-finsubspace},
we obtain an algorithm, denoted \textsf{CanonicalForm}, to compute the map~$\rho \circ H_\mathrm{rat}$ that is a
canonical form modulo~$M$.

\subsection{Variants and Improvements}

\subsubsection{Absolute Hermite reduction}\label{sec:absol-herm-reduct}
A notion of Hermite reduction that is independent from the base field
is obtained by replacing~$U \cdot P^{-s}$ with $\frac{\ud^{s-1}}{\ud x^{s-1}} \frac{U}{P}$
in the definition of~$H_P$.
Another benefit of this choice is that it is not necessary that~$P$ is irreducible to perform the reduction, but simply
that~$\ind_P(s)$ is either~$0$ or invertible. %
The denominators that appear in the computation can be factored on the fly into factors
with the required property: when some~$\ind_P(s)$ is neither~$0$
nor invertible modulo~$P$, a gcd computation gives a non-trivial divisor of~$P$.

\subsubsection{Reduction to the polynomial case}
The hypothesis that the differential operator~$M$ has polynomial coefficients
is important for the correctness of Algorithm~\ref{algo:HermiteRedRat}. To
compute canonical forms modulo an operator~$ M$ with rational coefficients, it
is sufficient to find a polynomial~$Q$ such that~$ M Q$ has polynomial
coefficients and then, to compute canonical forms modulo~$ M Q$ with the
algorithms above. Indeed, the image of~$\KK(x)$ by~$ M Q$ and~$M$ are the
same. The smallest such~$Q$ is the gcd of the denominators of the
coefficients of the adjoint of~$M$.

\subsubsection{Rational factors}
The following observation can be used to speed up the computation.
\begin{lemma}\label{rat-fact}
	Let $L, M \in \KK[x]\langle\partial_x\rangle$ and $A,B$ in~$\KK(x)$
  such that~$MA = BL$.
  If\/~$[\ ]_L$ is a canonical form \wrt~$L$, then\/
  $[\ ]_M : R\in\KK(x)\mapsto B \, [R/B]_L$
  is a canonical form \wrt~$M$.
\end{lemma}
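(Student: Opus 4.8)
The plan is to verify directly that the map $R \mapsto B\,[R/B]_L$ satisfies the two defining properties of a canonical form \wrt $M$ from Definition~\ref{def:canonical}, together with $\KK$-linearity. The whole argument rests on a single observation extracted from the hypothesis $MA = BL$: reading this identity of operators on an arbitrary $y \in \KK(x)$ gives $M(Ay) = B\,L(y)$. Since $A$ is a nonzero rational function, multiplication by $A$ is a bijection of $\KK(x)$, so as $y$ ranges over $\KK(x)$ so does $Ay$; taking images on both sides therefore yields the key identity
\[ \im M = B \cdot \im L. \]
This is the step I would isolate first, as everything else is a formal consequence of it.

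Granting this, $\KK$-linearity of $R \mapsto B\,[R/B]_L$ is immediate from the $\KK$-linearity of $[\ ]_L$ and of division and multiplication by the fixed element $B$. For property~(i), I would note that $M(R) \in \im M = B\cdot\im L$, hence $M(R)/B \in \im L = L(\KK(x))$; applying property~(i) of $[\ ]_L$ gives $[M(R)/B]_L = 0$, and therefore $[M(R)]_M = B\,[M(R)/B]_L = 0$. For property~(ii), I would write
\[ R - [R]_M = B\left( \frac{R}{B} - \Bigl[\tfrac{R}{B}\Bigr]_L \right), \]
and invoke property~(ii) of $[\ ]_L$, which places $R/B - [R/B]_L$ in $\im L$; multiplying by $B$ lands the whole expression in $B\cdot \im L = \im M$, as required.

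The argument is essentially formal, and I do not expect a genuine obstacle; the only point deserving care is the standing nonvanishing assumption $A, B \neq 0$, which is implicit in the statement (it is needed both for $R/B$ to make sense and for multiplication by $A$ to be invertible in the derivation of $\im M = B\cdot\im L$). I would make this hypothesis explicit at the outset, after which the four checks above close the proof.
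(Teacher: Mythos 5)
Your proof is correct and follows essentially the same route as the paper's: the paper's one-line verification $[M(y)]_M = B\,[L(A^{-1}y)]_L = 0$ and $R - [R]_M = B\left(R/B - [R/B]_L\right) \in B(\im L) = \im M$ is exactly your argument, with your identity $\im M = B\cdot\im L$ left implicit there. Your explicit remark that $A, B \neq 0$ is tacitly assumed is a fair point of care, but introduces nothing beyond what the paper's use of $A^{-1}$ and $R/B$ already presupposes.
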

\begin{proof}
  We check the properties of Def.~\ref{def:canonical}:
  $[ M(y) ]_M = B\, [ L(A^{-1} y) ]_L$ is 0 and
  $R - [R]_M = B \left( R/B - [R/B]_L \right)$ is in~$B (\im L) = \im M$.
\end{proof}
Lemma~\ref{rat-fact} may be used with~$A= B =\prod_\alpha(x-\alpha)^{m_\alpha}$,
where~$m_\alpha$ is the smallest negative integer root of the indicial
polynomial of~$M$ at~$\alpha$, and $0$ if none exists. This is mostly useful for equations of order~1, since the corresponding~$\alpha$ is not a singularity of the new operator, which becomes smaller.
The rational function $A$
plays the role of the \emph{shell} in previous reduction-based
algorithms~\cite{BoChChLiXi13,BoDuSa16}.

\subsection{Dimension of the Quotient with Fixed Poles}

\label{sec:bound-dimens-quot}

Let~$P \in \KK[x]$ be a squarefree polynomial
and let~$E_P = \KK\left[ x, P^{-1}
\right]$. %
Let~$\ker M \subset \KK(x)$ be the space of rational solutions of~$M$.
Let $r$ be the order of~$M$ and~$d$ the maximal degree of its coefficients.

\begin{proposition}[{Adolphson \cite[Sec.~5, Prop.~1]{Adolphson76}}]
	\label{prop:dimensioncoker}
  \begin{align*}
    \dim_{\KK} {E_P}/{M(E_P)} &= \dim_{\KK} \left(E_P \cap \ker M\right) - \sigma_\infty - \smash{\sum_{P(\alpha)=0}} \sigma_\alpha \\
                                &\leq (\deg P + 1)\cdot r + d.
  \end{align*}
\end{proposition}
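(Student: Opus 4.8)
The plan is to establish the equality first and the inequality second, since the inequality will follow from the equality by bounding the three terms $\dim_\KK(E_P \cap \ker M)$, $-\sigma_\infty$, and $-\sum_{P(\alpha)=0}\sigma_\alpha$ individually.

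For the equality, the key is an Euler-characteristic (index) argument applied to $M$ viewed as a $\KK$-linear endomorphism of $E_P$. The space $E_P = \KK[x,P^{-1}]$ is infinite-dimensional, so I would first cut it down to a finite-dimensional computation by a filtration by pole order and degree. Concretely, I would filter $E_P$ by the finite-dimensional subspaces
\[
E_P^{(N)} = \Bigl\{\, R \in E_P \st \deg R_{(\infty)} \leq N \text{ and } \ord_\alpha R \geq -N \text{ for all } \alpha \text{ with } P(\alpha)=0 \,\Bigr\},
\]
so that $\dim_\KK E_P^{(N)} = (N+1) + (\deg P)\cdot N$. The heart of the argument is the local analysis from \S\ref{sec:shifts-indic-equat}: by Equation~\eqref{eq:1} and the definition of $\sigma_\alpha$ and $\sigma_\infty$, the operator $M$ shifts pole orders at $\alpha$ by $-\sigma_\alpha$ and degrees at infinity by $-\sigma_\infty$. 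Thus $M$ maps $E_P^{(N)}$ into $E_P^{(N - \min_\alpha \sigma_\alpha \,+\, \text{const})}$, and for $N$ large the indicial polynomials no longer vanish at the relevant arguments, so the weak Hermite reduction formulas of \S\ref{sec:weak-herm-reduct} show that $M$ is surjective onto each graded piece beyond a fixed threshold. The index $\dim\ker(M|_{E_P^{(N)}}) - \dim\mathrm{coker}(M|_{E_P^{(N)}})$ therefore stabilizes, and tracking how the shifts $\sigma_\alpha,\sigma_\infty$ change the dimension count across the filtration produces exactly $-\sigma_\infty - \sum_{P(\alpha)=0}\sigma_\alpha$ as the net discrepancy. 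The kernel stabilizes to $E_P \cap \ker M$ because any rational solution of $M$ has bounded pole orders.

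For the inequality, I would bound each term. The solution space $\ker M$ has dimension at most $r$ over any differential field of constants, and $E_P \cap \ker M$ is a $\KK$-subspace of it, so $\dim_\KK(E_P\cap\ker M)\le r$. For the shift terms, the formulas $\sigma_\alpha = \min_i(\ord_\alpha p_i - i)$ and $\sigma_\infty = \max_i(i - \deg p_i)$ give $-\sigma_\infty \le d$ (taking $i=0$, since $\deg p_0 \le d$) and $-\sigma_\alpha \le r$ at each root $\alpha$ of $P$ (since $\sigma_\alpha \ge -r$ as $\ord_\alpha p_i \ge 0$ and the $i=r$ term contributes $\ord_\alpha p_r - r \ge -r$). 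Summing over the at most $\deg P$ roots of $P$ and combining gives $\dim_\KK E_P/M(E_P) \le r + d + (\deg P)\cdot r = (\deg P + 1)\cdot r + d$.

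The main obstacle I anticipate is the stabilization/surjectivity step in the equality: one must show rigorously that, outside a finite set of "exceptional" pole orders where some $\ind_\alpha$ or $\ind_\infty$ vanishes, $M$ induces an isomorphism on the associated graded of the filtration, and then account exactly for the finitely many exceptional degrees. This is precisely where the indicial polynomials and their integer roots enter, and getting the bookkeeping of shifts correct — so that the telescoping of dimension differences collapses to the stated closed form — is the delicate part. Since the statement is attributed to Adolphson, I would cite \cite{Adolphson76} for the equality and supply only the short derivation of the inequality in detail, treating the index computation as the cited input.
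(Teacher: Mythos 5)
Your proposal is correct and follows essentially the same route as the paper's own sketch: truncate $E_P$ to a finite-dimensional subspace by bounding pole orders (the paper's $E_P(s)$, your $E_P^{(N)}$), use the shifts $\sigma_\alpha,\sigma_\infty$ and the nonvanishing of the indicial polynomials beyond a threshold to show the quotient stabilizes and to account for the dimension discrepancy, and then bound $\dim\ker M\le r$, $-\sigma_\alpha\le r$, $-\sigma_\infty\le d$ exactly as the paper does. The only difference is cosmetic (a single uniform bound $N$ versus separate bounds $s_\alpha$ per pole, and an index-stabilization phrasing versus the paper's direct comparison of $E_P(t)/M(E_P(s))$ with $E_P/M(E_P)$).
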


\begin{proof}[Sketch of the proof]
  Let~$Z = \left\{ \alpha\in\smash{\overline \KK} \st P(\alpha)=0 \right\}$.
  Given $\deg P+1$ positive integers~$s_\infty$ and~$s_\alpha$ ($\alpha\in Z$),
  let~$E_P(s)$ denote the subspace of all~$R\in E_P$ such that the pole order at
  $\alpha$ is at most~$s_\alpha$ for~$\alpha \in Z \cup \{\infty\}$, that is
  all elements $R\in E_P(s)$ of the form
  \begin{equation*}\textstyle
    R = \sum_{\alpha\in Z} \sum_{s=1}^{s_\alpha}
    \frac{c_{\alpha,s}}{(x-\alpha)^s} + \sum_{s=0}^{s_\infty} c_{\infty,s} x^s.
  \end{equation*}
  We choose $s_\alpha$ and~$s_\infty$ large enough so that~$\ker M\subset E_P(s)$.
  Let~$t_\alpha = s_\alpha - \sigma_\alpha$ ($\alpha \in Z \cup \{\infty\}$).
  We check~$M(E_P(s))\subseteq E_P(t)$ and that a basis of~$E_P(t)/M(E_P(s))$ induces a
  basis of~$E_P/M(E_P)$.
  The bounds~$-\sigma_\alpha \leq r$, $-\sigma_\infty \leq d$ and $\dim
  \ker M \leq r$ give the inequality.
\end{proof}

\section{Creative Telescoping}\label{sec:appl-creat-telesc}

The method of creative telescoping is an approach to the computation of definite
sums and integrals of objects characterized by linear functional equations.
The notion of linear functional equation is formalized by \emph{Ore algebras}.
In this part, we consider the Ore algebra
$\mathbb{A} = \KK(x)\langle \partial_x, \partial_1,\dots,\partial_e
\rangle$,
where~$\partial_x$ is the differentiation with respect to~$x$
and~$\partial_1,\dots,\partial_e$ are arbitrary Ore operators.
In the most typical case, $\KK = \QQ(t_1,\dotsc,t_e)$
and each~$\partial_i$ is either the differentiation with respect
to~$t_i$ or the shift
$t_i \mapsto t_i+1$.

For a given function~$f$ in a function space on which~$\mathbb{A}$
acts, the annihilating ideal of~$f$ is the left ideal~$\ann f\subseteq\mathbb{A}$ of all operators that annihilate~$f$.
For example, the annihilating ideal in~$\KK(x)\langle \partial_x \rangle$ of~$f = \sin(x)$ is generated by~$\partial_x^2 + 1$
because~$\sin''(x) = -\sin(x)$.

A left ideal $\mathcal{I}$ is \emph{D-finite} if the quotient~$\mathbb{A}/\mathcal{I}$ is
a finite-dimensional vector space over~$\KK(x)$. A function is called \emph{D-finite} if its
annihilating ideal is D-finite.
We refer to~\cite{Chyzak98,Chyzak14,ChSa98} for an introduction to Ore
algebras, creative telescoping and their applications.

Given a D-finite function~$f$,
the problem of \emph{creative telescoping} is the computation of
a generating set of the \emph{telescoping  ideal} of~$f$ \wrt~$x$,
or of its residue class in $\mathbb{A}/\ann f$.
This is by definition the left ideal~$\mathcal{T}_f \subset \KK\langle
\partial_1,\dotsc,\partial_e\rangle$ of all operators~$T$ such that~$T
+ \partial_x G \in
\ann f$ for some~$G \in \mathbb{A}$; equivalently,
\[ \mathcal{T}_f = \left( \ann f + \partial_x \mathbb{A} \right)
\cap
  \KK\langle \partial_1,\dotsc,\partial_e\rangle. \]

\begin{example}\label{example:sec-ct}
  In~\S \ref{sec:ex-ct}, we use the Ore algebra
  $\KK(x)\langle \partial_x, \partial_1, \partial_2 \rangle$,
  with $\partial_1=d/dp$ and $\partial_2=S_n$ the
  shift
  \wrt~$n$. The annihilating ideal $\mathcal{I}$
  of~$F_n(p,x)$ is generated by three operators,
 one for each
  functional equation.
  It is D-finite and the quotient $\mathbb{A}/\mathcal{I}$ has
  dimension~$2$, with
  basis~$1$ and~$\partial_x$.
  The telescoping ideal of~$F_n(p,x)$
	 (or, equivalently, of~$1 \in  \mathbb{A}/\mathcal{I}$)
  is generated by~$p^2
  \partial_p^2 +p\partial_p -(n^2+p^2)$ and~$pS_n +p\partial_p - n$.
\end{example}

\subsection{Cyclic Vector}
\label{sec:cyclic-vector}

Let~$\mathcal{I}\subseteq \mathbb{A}$ be a D-finite ideal
and let~$r$ be the dimension of~$\mathbb{A}/\mathcal{I}$ over~$\KK(x)$.
We denote~$L(\gamma)$ the multiplication of an operator~$L\in \mathbb{A}$ and a
residue class~$\gamma \in \mathbb{A}/\mathcal{I}$.

Let~$\gamma\in \mathbb{A}/\mathcal{I}$ be a \emph{cyclic vector} with
respect to~$\partial_x$.
This means that~$\Gamma = \left\{ \gamma, \partial_x(\gamma), \dotsc,
  \partial_x^{r-1}(\gamma) \right\}$ is a basis
of~$\mathbb{A}/\mathcal{I}$;
or, equivalently, that every~$f\in \mathbb{A}/\mathcal{I}$ can be
written~$A_f(\gamma)$ for some~$A_f \in \KK(x)\langle \partial_x\rangle$.

Let~$L \in \KK[x]\langle\partial_x\rangle$ be a minimal
annihilating operator of~$\gamma$,
that is~$L(\gamma) = 0$ and~$L$ has order~$r$ (because~$\Gamma$ is a basis,
there is no non-zero lower order annihilating operator for~$\gamma$).
A cyclic vector always exists when~$\mathcal{I}$ is
$D$-finite~\cite{ChurchillKovacic2002,Adjamagbo88}.
It plays a role
analogous to that of primitive elements for 0-dimensional polynomial
systems.

For~$1\leq i\leq e$, we define a $\KK$-linear map~$\lambda_i : \KK(x) \to \KK(x)$
as follows.
First, we can write~$\partial_i(\gamma) = B_i(\gamma)$ for some operator $B_i \in \KK(x)\langle \partial_x \rangle$.
Next, let~$\sigma_i$ and $\delta_i$ be the maps%
\footnote{If~$\partial_i$ is the differentiation w.r.t.~$t_i$,
  then~$\sigma_i(R) = R$ and~$\delta_i(R) = {\partial R}/{\partial t_i}$.\\
  If~$\partial_i$ is the shift~$t_i \mapsto t_i + 1$, then~$\sigma_i(R) = R|_{t_i \gets
    t_i+1}$ and~$\delta_i(R) = 0$.}
such that $\partial_i R = \sigma_i(R) \partial_i +
\delta_i(R)$ for any~$R\in\KK(x)$.
Finally, we define for~$R\in\KK(x)$
\[ \lambda_i(R) = B_i^*(\sigma_i(R)) + \delta_i(R), \]
where~$B_i^*(\sigma_i(R)) \in \KK(x)$ is the result of applying the adjoint
operator $B_i^*$ to~$\sigma_i(R)$, not the operator $B_i^* \sigma_i(R)$.

\begin{proposition}\label{prop:novector}
  With the notation above:
  \begin{enumerate}[(i)]
  \item $f = A_f^*(1) \gamma + \partial_x(Q)$, for some~$Q\in \mathbb{A}/\mathcal{I}$.
  \end{enumerate}
  Moreover, for any~$R\in\KK(x)$:
  \begin{enumerate}[(i),resume]
  \item $\partial_i(R \gamma) = \lambda_i(R) \gamma + \partial_x(Q)$, for some~$Q\in
    \mathbb{A}/\mathcal{I}$.
  \item $R \gamma \in \partial_x \left( \mathbb{A}/\mathcal{I} \right)$ if and
    only if~$R \in L^*\!\left( \KK(x) \right)$.
  \end{enumerate}
\end{proposition}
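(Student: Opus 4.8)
The plan is to derive all three items from Lagrange's identity~\eqref{eq:Lagrange}, specialized to the $\partial_x$-operators attached to~$\gamma$, together with the module structure provided by the cyclic vector.

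For item~(i) I would apply~\eqref{eq:Lagrange} to the operator~$A_f$, the function~$\gamma$, and~$u=1$. Since~$f=A_f(\gamma)$, this reads $f - A_f^*(1)\,\gamma = \partial_x\bigl(P_{A_f}(\gamma,1)\bigr)$, and the right-hand argument~$Q=P_{A_f}(\gamma,1)$ lies in~$\mathbb{A}/\mathcal{I}$ because it is a $\KK(x)$-linear combination of~$\partial_x$-derivatives of~$\gamma$. For item~(ii) I would first rewrite~$\partial_i(R\gamma)$ inside~$\mathbb{A}/\mathcal{I}$ using the Ore commutation~$\partial_iR=\sigma_i(R)\partial_i+\delta_i(R)$ and the relation~$\partial_i(\gamma)=B_i(\gamma)$, which gives $\partial_i(R\gamma)=\sigma_i(R)\,B_i(\gamma)+\delta_i(R)\,\gamma$. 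Applying~\eqref{eq:Lagrange} to~$B_i$, $\gamma$, and~$u=\sigma_i(R)$ replaces the first summand by~$B_i^*(\sigma_i(R))\,\gamma$ modulo~$\partial_x(\mathbb{A}/\mathcal{I})$, and collecting the two terms yields exactly~$\lambda_i(R)\,\gamma+\partial_x(Q)$ by the definition of~$\lambda_i$.

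Item~(iii) is the substantial one, and I would treat its two directions separately. The implication~$R\in L^*(\KK(x))\Rightarrow R\gamma\in\partial_x(\mathbb{A}/\mathcal{I})$ is again immediate from~\eqref{eq:Lagrange}: writing~$R=L^*(u)$ and using~$L(\gamma)=0$ gives $R\gamma=-\partial_x\bigl(P_L(\gamma,u)\bigr)$. For the converse I would pass to the module description afforded by~$\gamma$. Because~$\gamma$ is cyclic and~$L$ is a minimal annihilator of order~$r=\dim_{\KK(x)}\mathbb{A}/\mathcal{I}$, the annihilator of~$\gamma$ in~$D:=\KK(x)\langle\partial_x\rangle$ is exactly the left ideal~$DL$, so~$A\mapsto A(\gamma)$ induces an isomorphism of left $D$-modules~$D/DL\cong\mathbb{A}/\mathcal{I}$ carrying~$\bar1$ to~$\gamma$. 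Under it, the hypothesis~$R\gamma\in\partial_x(\mathbb{A}/\mathcal{I})$ becomes the membership~$R\in\partial_xD+DL$ of the order-$0$ operator~$R$.

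The remaining core is the algebraic equivalence that the order-$0$ elements of~$\partial_xD+DL$ are precisely those of the form~$L^*(u)$ with~$u\in\KK(x)$, and I expect this to be the main obstacle. I would establish it by applying the adjoint anti-automorphism, which fixes~$R$ and turns~$\partial_xD+DL$ into~$D\partial_x+L^*D$; the condition then reads $R=P\partial_x+L^*C$ for some~$P,C\in D$. Writing~$C=u+E\partial_x$ with~$u\in\KK(x)$ its order-$0$ coefficient gives $R=(P+L^*E)\partial_x+L^*u$; comparing order-$0$ coefficients, and using that a right multiple of~$\partial_x$ has none while the order-$0$ coefficient of the operator~$L^*u$ equals the applied value~$L^*(u)$, forces~$R=L^*(u)$. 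The reverse inclusion is the same computation read backwards. The delicate point throughout is the structural identity~$\ann_D(\gamma)=DL$, i.e.\ that no nonzero operator of order below~$r$ kills~$\gamma$; this is precisely where the hypothesis that~$\Gamma$ is a basis (equivalently, that~$L$ attains the minimal order~$r$) is used.
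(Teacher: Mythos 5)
Your arguments for items (i) and (ii) coincide with the paper's: the paper likewise obtains (i) by applying Lagrange's identity~\eqref{eq:Lagrange} to $A_f$ with $u=1$, and (ii) by combining the Ore commutation rule, the relation $\partial_i(\gamma)=B_i(\gamma)$, and~\eqref{eq:Lagrange} applied to $B_i$ with $u=\sigma_i(R)$. The divergence is in item (iii), which the paper does not prove at all: it simply cites Abramov and van~Hoeij \cite[Prop.~3]{AbHo99}. You instead supply a complete, self-contained argument, and it is correct: the identification $\ann_D(\gamma)=DL$ (valid by right Euclidean division by $L$, since any remainder of order $<r$ killing $\gamma$ would contradict $\Gamma$ being a basis) turns the hypothesis into $R\in\partial_xD+DL$; the adjoint involution converts this to $R\in D\partial_x+L^*D$; and your coefficient comparison is sound, since the order-$0$ coefficient of an operator $A$ is $A(1)$, so a right multiple of $\partial_x$ contributes $0$ while the operator product $L^*u$ contributes the applied value $L^*(u)$, forcing $R=L^*(u)$. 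This buys a fully self-contained proposition at the cost of a page of noncommutative algebra; the paper's choice to outsource (iii) keeps the exposition short but leaves the reader to consult \cite{AbHo99}, whose argument is in the same spirit (adjoints and Lagrange's identity). One presentational quibble: your "forward direction'' paragraph proves the implication $R\in L^*(\KK(x))\Rightarrow R\gamma\in\partial_x(\mathbb{A}/\mathcal{I})$, which is the same inclusion you later recover as "the reverse inclusion read backwards''; you could drop one of the two occurrences.
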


\begin{proof}
  Using that~$f = A_f(\gamma)$, Lagrange's identity~\eqref{eq:Lagrange} shows that $1 A_f
  (\gamma) -
  A_f^*(1) \gamma = \partial_x(Q)$ for some~$Q$. This gives~\emph{(i)}.
Similarly, using the commutation rule
  for~$\partial_i$ and the definition
  of~$B_i$ yields~\emph{(ii)}. Property~\emph{(iii)} is shown by Abramov and
  van~Hoeij~\cite[Prop. 3]{AbHo99}.
\end{proof}

\begin{example}[Continuing Example~\ref{example:sec-ct}]
  The element $1\in \mathbb{A}/\mathcal{I}$ is a cyclic vector since~$\left\{
    1,\partial_x \right\}$ is a basis of the quotient.
\end{example}

Actual computations are performed using a Gröbner basis of~$\mathcal{I}$
and linear algebra in the finite-dimensional $\KK(x)$-vector space~$\mathbb{A}/\mathcal{I}$.

\subsection{Creative Telescoping by Reduction}
\begin{algo}
  \begin{algorithmic}
    \Function{CreativeTelescoping}{$\mathcal{I}$, $f$}
    \State $\gamma \gets$ a cyclic vector of~$\mathbb{A}/\mathcal{I}$ with respect to $\partial_x$
    \State $L \gets$ the minimal operator annihilating~$\gamma$
    \State $\lambda_1,\dotsc,\lambda_e \gets$ maps as in
    Prop.~\ref{prop:novector}
    \State $F_1 \gets \mathsf{CanonicalForm}(A_f^*(1), L^*)$
    \State $\mathcal{L} \gets [1]$ \Comment{list of monomials in
    $\partial_1,\dots,\partial_e$}
    \State $G \gets \{\}$ \Comment{Gröbner basis being computed}
    \State $Q \gets \{\}$ \Comment{Generators of the quotient}
    \State $R \gets \{\}$ \Comment{Set of reducible monomials}
     \While{$\mathcal{L}\neq\emptyset$}
       \State Remove the first element $\mu$ of $\mathcal L$
       \If{$\mu$ is a not multiple of an element of~$R$}
       \If{$\mu\neq1$}
       	   \State Pick~$i$ such that $\mu/\partial_i\in Q$
	       \State $F_{\mu} \gets \mathsf{CanonicalForm}(\lambda_i
	       (F_{\mu/\partial_i}), L^*)$
       \EndIf
       \If{$\exists$ a $\KK$-linear rel. between $F_\mu$ and $\{F_\nu\mid
           \nu\in Q\}$}
         \State $(a_\nu)_{\nu\in Q} \gets$ coeff. of the relation
         $F_\mu = \sum_{\nu\in Q} a_\nu F_\nu$
         \State Add $\mu - \sum_{\nu\in Q} a_\nu \nu$ to $G$;
         Add $\mu$ to $R$
      \Else
        \State Add $\mu$ to $Q$
        \For{$1 \leq i \leq e$}
        Append the monomial $\partial_i\mu$ to $\mathcal{L}$
        \EndFor
      \EndIf
       \EndIf
      \EndWhile
      \smallskip
    \Return $G$ %
    \EndFunction
  \end{algorithmic}
  \caption[]{Reduction-based creative telescoping algorithm
    \begin{description}
    \item[Input] $\mathcal{I}$ a D-finite ideal of~$\mathbb{A}$ and~$f \in \mathbb{A}/\mathcal{I}$
    \item[Output] Generators %
      of the telescoping ideal~$\mathcal{T}_f$
    \end{description}}
  \label{algo:ct1}
\end{algo}

We now present our algorithm (Algorithm~\ref{algo:ct1}) based on generalized Hermite reduction for
the computation of the telescoping ideal $\mathcal{T}_f$ for an element~$f$ of
some D-finite quotient~$\mathbb{A}/\mathcal{I}$.
The element~$f$ is often~$1$, as in Example~\ref{example:sec-ct}.

In the same way as Chyzak's algorithm~\cite{Chyzak00},
ours iterates over monomials in~$\partial_1,\dotsc,\partial_e$
by a strategy reminiscent of the FGLM algorithm
\cite{FaugereGianniLazardMora-1993-ECZ}.
Each iteration finds
either a new generator of
$\KK\langle\partial_1,\dotsc,\partial_e\rangle/\mathcal{T}_f$
or a new element in~$\mathcal{T}_f$.
Let $[\ ]$ be the generalized Hermite reduction with respect to~$L^*$,
the adjoint of the minimal annihilating operator of the cyclic vector~$\gamma$.
Since every visited monomial~$\mu$ (but the first)
can be written~$\partial_i \nu$
for a previously visited monomial~$\nu$,
we define~$F$ inductively by the formula $F_\mu = [\lambda_i(F_\nu)]$
and the base case $F_1 = A_f^*(f)$.
With Prop.~\ref{prop:novector} and Theorem~\ref{thm:canonical},
we check that
$\mu(f) = F_\mu \gamma + \partial_x(Q_\mu)$,
for some~$Q_\mu \in \mathbb{A}/\mathcal{I}$ and that
\begin{equation}\label{eq:3}
a_1 \mu_1 + \dotsb + a_s \mu_s \in \mathcal{T}_f \quad \Leftrightarrow\quad
  a_1 F_{\mu_1} + \dotsb + a_s F_{\mu_s}=0.
\end{equation}

\begin{theorem}
  On input~$\mathcal{I}$, Algorithm~\ref{algo:ct1} terminates if and only
  if the telescoping ideal~$\mathcal{T}_f$ is D-finite.
  It outputs a Gröbner basis of~$\mathcal{T}_f$ for the grevlex monomial ordering.
\end{theorem}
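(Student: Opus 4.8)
The plan is to establish the two assertions separately: first the correctness of the output (that upon termination $G$ generates $\mathcal{T}_f$ and is a grevlex Gröbner basis), and then the equivalence between termination and D-finiteness of $\mathcal{T}_f$. Throughout I would rely on the key equivalence~\eqref{eq:3}, which reduces membership in $\mathcal{T}_f$ to a $\KK$-linear dependency among the reduced forms $F_\mu$, and on the fact (from Theorem~\ref{thm:canonical}) that $[\ ]$ is a genuine canonical form, so that $a_1 F_{\mu_1} + \dotsb + a_s F_{\mu_s}$ is literally $0$ in $\KK(x)$ exactly when the corresponding operator lies in $\mathcal{T}_f$.

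For correctness, I would first argue the loop invariant that $G$ and $Q$ behave like the partial Gröbner basis and partial staircase in an FGLM-style traversal. Each monomial $\mu$ popped from $\mathcal{L}$ is tested via a linear algebra computation: if $F_\mu$ depends linearly on the $F_\nu$ for $\nu\in Q$, then by~\eqref{eq:3} the operator $\mu - \sum a_\nu \nu$ lies in $\mathcal{T}_f$, and it is added to $G$; otherwise $F_\mu$ is linearly independent, so $\mu$ enters the staircase $Q$. I would verify that the family $\{F_\nu \mid \nu\in Q\}$ stays $\KK$-linearly independent throughout, whence the monomials in $Q$ are exactly a basis of a vector-space complement to $\mathcal{T}_f$, i.e.\ the standard monomials. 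The elements added to $G$ have leading monomial (for grevlex) equal to the $\mu$ that produced them, since every proper divisor of $\mu$ was visited earlier and is either in $Q$ or reducible. To conclude that $G$ is a grevlex Gröbner basis I would check that the traversal visits monomials in a grevlex-compatible order (monomials are only enqueued as $\partial_i\mu$ after $\mu$ is processed, and the multiple-of-$R$ test discards anything already reducible), so that when the loop empties $\mathcal{L}$ the set of leading monomials of $G$ generates the initial ideal $\operatorname{in}(\mathcal{T}_f)$; together with the staircase $Q$ forming the standard monomials this gives the Gröbner basis property.

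For the termination equivalence, one direction is immediate: if the algorithm halts, $Q$ is a finite staircase and $\KK\langle\partial_1,\dotsc,\partial_e\rangle/\mathcal{T}_f$ has the finite $\KK$-basis $\{\nu \mid \nu\in Q\}$, so $\mathcal{T}_f$ is D-finite. For the converse I would show that D-finiteness of $\mathcal{T}_f$ forces the staircase $Q$ to remain finite: the standard monomials of a D-finite ideal form a finite set, and by the independence invariant above each monomial added to $Q$ is a genuine standard monomial, so only finitely many monomials ever enter $Q$. Since every monomial enqueued in $\mathcal{L}$ is either $1$ or of the form $\partial_i\nu$ with $\nu\in Q$, the finiteness of $Q$ bounds the total number of enqueued monomials, forcing $\mathcal{L}$ to empty.

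The main obstacle I anticipate is the converse termination argument, specifically ruling out an infinite sequence of monomials that are repeatedly found reducible without $Q$ or $\mathcal{L}$ stabilizing. The crux is that whenever $\mu$ is reducible (added to $R$), it is never enqueued further, so each leaf of the traversal terminates a branch; I must argue that the reducible-monomial test via multiples of $R$, combined with the grevlex well-ordering and the finiteness of the staircase, leaves no infinite chain of unprocessed monomials. This amounts to the standard FGLM termination argument, adapted here to the fact that our membership test is performed through the reduced forms $F_\mu$ rather than through a stored Gröbner basis of $\mathcal{I}$; the subtlety is ensuring that the incremental computation $F_\mu = [\lambda_i(F_\nu)]$ faithfully realizes the class $\mu(f)\bmod(\partial_x\mathbb{A}+\ann f)$ at every step, which is exactly what Prop.~\ref{prop:novector} and the canonical-form property guarantee.
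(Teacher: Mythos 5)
Your proposal is correct and follows essentially the same route as the paper: correctness of $G$ via the equivalence~\eqref{eq:3} together with the FGLM-style invariant that $Q$ is exactly the set of non-reducible (standard) monomials visited so far, and termination from the finiteness of that set when $\mathcal{T}_f$ is D-finite. The only cosmetic difference is that you bound the queue directly by $1+e\,|Q|$, whereas the paper additionally invokes Dickson's lemma to bound the antichain $R$; both observations close the argument.
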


\begin{proof}
  By construction, when a monomial is added to the set~$R$, it is not a
  multiple of another monomial in~$R$. By Dickson's lemma~\cite{Dickson13},
  this may happen only finitely many times.

  The way~$\mathcal{L}$ is filled ensures that when a monomial~$\mu$ is visited,
  every smaller monomial has been visited or is a multiple of a reducible
  monomial.
  This implies, by induction, that~$Q$ is the set of all non-reducible
  monomials that are smaller than~$\mu$, when~$\mu$ is visited.

  If~$\mathcal{T}_f$ is not D-finite, then there are infinitely many non-reducible
  monomials and the algorithm does not terminate.
  Otherwise, the algorithm terminates, since neither
  $Q$ nor~$R$ may
  grow indefinitely.

  To check that~$G$ is a Gröbner basis, we note that:
  $G \subset \mathcal{T}_f$, by the equivalence~\eqref{eq:3};
  the leading monomials of the elements of~$G$ are the elements of~$R$;
  and every leading monomial of an element~$\mathcal{T}_f$ (that is a reducible monomial) is a multiple of an element of~$R$.
\end{proof}

\subsection{Variants and Improvements}
\subsubsection{Different term order} As stated, Algorithm~\ref{algo:ct1} computes
relations by increasing total degree in~$(\partial_1,\dots,\partial_e)$. To choose a different term order, it is sufficient to change the selection of the monomial~$\mu$ at the beginning of the loop and select the smallest one for the given order instead.

\subsubsection{Different termination rule} Instead of waiting for the
list~$\mathcal{L}$ to be empty, one can stop as soon as a
relation is found, and then it is the minimal one for the chosen
term order. This variant does not require a D-finite
ideal to terminate. Another possibility is to stop as soon as
the degree of~$\mu$ is larger than a predefined bound, returning all
the relations that exist below this bound.
\subsubsection{Certificates} While an important point of the reduction-based
approaches to creative telescoping is to avoid the computation of
certificates (in contrast with Chyzak's and Koutschan's algorithms
that require their computation), it is also possible to modify the
algorithm so that it returns a certificate for each element of the basis. Indeed, a certificate of the generalized Hermite reduction of \S\ref{sec:canonical-forms} can be propagated through the algorithms.

\subsection{D-finiteness of the Telescoping Ideal}

In the general case, the telescoping ideal~$\mathcal{T}_f$ of a D-finite
function~$f$ need not be D-finite. %
However, when the auxiliary operators~$\partial_1,\dotsc,\partial_e$ are
differentiation operators (as opposed to shift operators for example),
then~$\mathcal{T}_f$ is always D-finite if~$f$ is;
this is a well-known result in the theory of D-finiteness and
holonomy~\cite{Takayama1992}.
We give here a new proof of this fact, as a corollary of a more general
sufficient condition for general Ore operators.

\begin{definition}
  A D-finite function $f$ is \emph{singular} (\wrt $\partial_x$) at~$\alpha\in\calg$  if
  every nonzero operator~$L \in \KK(x)\langle \partial_x \rangle$ such
  that~$L(f) = 0$ is singular at~$\alpha$.
  The \emph{singular set} (\wrt $\partial_x$) of~$f$, denoted~$\Sing(f)$, is the set of all singular points of~$f$.
\end{definition}

Let~$\Theta = \{1, \partial_1,\partial_2,\dotsc,\partial_1^2,
\partial_1\partial_2,\dotsc\}$ be the set of all monomials in the
variables~$\partial_1,\dotsc,\partial_e$.

\begin{theorem}\label{thm:dfinite-telescoping}
  For any D-finite function~$f$, if\/ $\bigcup_{\mu\in\Theta} \Sing(\mu(f))$
  is finite, then~$\mathcal{T}_f$ is D-finite.
\end{theorem}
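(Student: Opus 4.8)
The plan is to reduce the D-finiteness of $\mathcal{T}_f$ to a finite-dimensionality statement about the reduced forms $F_\mu$, and then to bound their $\KK$-span using Adolphson's estimate. Keep the notation of the algorithm: $\gamma$ is a cyclic vector, $L$ its minimal operator of order $r$, $[\ ]$ the canonical form with respect to $L^*$, and for each monomial $\mu\in\Theta$ one has $\mu(f)=F_\mu\,\gamma+\partial_x(Q_\mu)$ with $F_\mu\in\KK(x)$ a canonical form and $Q_\mu\in\mathbb{A}/\mathcal{I}$. By the equivalence~\eqref{eq:3}, the linear extension of $\mu\mapsto F_\mu$ has kernel exactly $\mathcal{T}_f$, so it induces an isomorphism $\KK\langle\partial_1,\dotsc,\partial_e\rangle/\mathcal{T}_f\xrightarrow{\ \sim\ }V$, where $V=\operatorname{span}_\KK\{F_\mu:\mu\in\Theta\}\subseteq\KK(x)$. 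It therefore suffices to show that $V$ is finite-dimensional over $\KK$.

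The heart of the matter, and the step I expect to be the main obstacle, is to confine the poles of all the $F_\mu$ to one finite set. I would set $S=\Sing(L)\cup\bigcup_{\mu\in\Theta}\Sing(\mu(f))$, which is finite by hypothesis, and claim that every $F_\mu$ has its poles in $S$. Fix $\alpha\notin S$. Then $\alpha$ is an ordinary point of $L$, so the connection induced by $\partial_x$ on $\mathbb{A}/\mathcal{I}$ is regular at $\alpha$; and, as $\alpha\notin\Sing(\mu(f))$, the section $\mu(f)$ is annihilated by some operator regular at $\alpha$ and is thus a regular local section. Since at an ordinary point $\partial_x$ is onto the regular germs, regular germs are $\partial_x$-exact and so have trivial class in the local cokernel; writing $F_\mu\gamma=\mu(f)-\partial_x(Q_\mu)$ then shows that $F_\mu\gamma$ is $\partial_x$ of a germ meromorphic at $\alpha$. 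Through the local form of Prop.~\ref{prop:novector}~(iii) (itself a consequence of Lagrange's identity~\eqref{eq:Lagrange}), this means that $F_\mu$ lies, locally at $\alpha$, in $L^*(\text{germs at }\alpha)$, and the local weak Hermite reduction of such a germ leaves no pole at $\alpha$; as $\rho_{\Exc}$ only subtracts functions with poles in $\Sing(L)$ (Prop.~\ref{prop:generators-exc}), the canonical form $F_\mu=[F_\mu]$ has no pole at $\alpha$ either. The delicate point, which is exactly where the finiteness hypothesis is needed, is that the weak reduction \emph{can} leave poles of order up to $r$ at ordinary points: one must check that any such surviving pole records a genuine local obstruction, incompatible with $\mu(f)$ being regular at $\alpha$, so that apparent singularities are argued away.

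Granting this confinement, the conclusion is immediate. The set $S$ is stable under the Galois action of $\calg$ over $\KK$, since it is cut out by operators defined over $\KK(x)$, so $S$ is the zero set of a squarefree polynomial $P_S\in\KK[x]$; put $E_{P_S}=\KK[x,P_S^{-1}]$. By the previous step $V\subseteq E_{P_S}$, and I consider the composite $V\hookrightarrow E_{P_S}\twoheadrightarrow E_{P_S}/L^*(E_{P_S})$. Its kernel lies in $V\cap\im L^*$; but $V$ is contained in the image of the linear idempotent map $[\ ]$, so any $v\in V\cap\im L^*$ satisfies $v=[v]=0$ by Theorem~\ref{thm:canonical}. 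The composite is thus injective, whence $\dim_\KK V\leq\dim_\KK E_{P_S}/L^*(E_{P_S})$, which is finite by Adolphson's Proposition~\ref{prop:dimensioncoker} applied to $M=L^*$ and $P=P_S$. Hence $V$ is finite-dimensional and $\mathcal{T}_f$ is D-finite.
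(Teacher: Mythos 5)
Your overall architecture matches the paper's: identify $\KK\langle\partial_1,\dotsc,\partial_e\rangle/\mathcal{T}_f$ with a subspace of $\KK(x)$, confine the poles of that subspace to a finite set $S=\Sing(L)\cup\bigcup_{\mu\in\Theta}\Sing(\mu(f))$, and invoke Adolphson's bound (Proposition~\ref{prop:dimensioncoker}) for $M=L^*$. Your final step (injectivity of $V\to E_{P_S}/L^*(E_{P_S})$ via idempotence of $[\ ]$) is fine. The genuine gap is in the pole-confinement step, which you yourself flag as ``the delicate point'' and do not close. Two things are missing. First, you use a \emph{local} version of Proposition~\ref{prop:novector}~(iii) --- from ``$F_\mu\gamma$ is $\partial_x$ of a germ meromorphic at $\alpha$'' to ``$F_\mu$ lies in $L^*$ of germs at $\alpha$'' --- but the paper (and Abramov--van Hoeij) only state the global rational equivalence; the localized implication needs its own proof. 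Second, and more importantly, even granting it, membership in the local image of $L^*$ does not by itself kill the pole at $\alpha$: as you observe, the weak reduction deliberately keeps poles of order $1,\dotsc,r$ at ordinary points (these are exactly the orders where $\ind_\alpha(-\sigma_\alpha-s)=0$ with $\sigma_\alpha=-r$). You assert that a surviving pole ``records a genuine local obstruction'' without argument. The step can be closed --- since $F_\mu$ is $H$-reduced its polar part at an ordinary $\alpha$ has order at most $r$, whereas any nonzero element of the span of the polar parts of $L^*\bigl((x-\alpha)^{-s}\bigr)$, $s\geq 1$, has a leading term of order $s+r\geq r+1$ because $\ind_\alpha(-s)\neq 0$ for $s\geq 1$; together these force the polar part to vanish --- but your text does not contain this argument, and it still rests on the unproven local exactness statement.

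The paper sidesteps both issues by confining the poles of the \emph{unreduced} representatives $R_\mu=A_{\mu(f)}^*(1)$ rather than of the canonical forms (the injection goes into $\KK(x)/\im L^*$, so reduction is not needed at this stage): for $\alpha\notin S$ one picks an annihilating operator $M$ of $\mu(f)$ regular at $\alpha$, factors $MA_{\mu(f)}=BL$ by minimality of $L$, and a short indicial-polynomial computation --- choosing $j$ maximal with $\ord_\alpha a_j=\min_i\ord_\alpha a_i$ among the coefficients $a_i$ of $A_{\mu(f)}$ --- shows that $\min_i\ord_\alpha a_i$ is a root of the indicial polynomial of $M$ at the ordinary point $\alpha$, hence nonnegative. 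So no coefficient of $A_{\mu(f)}$, and therefore neither $R_\mu$, has a pole at $\alpha$. This is entirely algebraic and requires neither local surjectivity of $\partial_x$ nor any analysis of what the reduction keeps at ordinary points. You should either adopt that argument or supply the two missing lemmas in your local route.
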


\begin{proof}
  Let~$\gamma$ be a cyclic vector of~$\mathbb{A}/\ann f$ \wrt $\partial_x$
  with minimal annihilating operator $L\in\KK(x)\langle \partial_x \rangle$ of order~$r$.
  For~$\mu\in\Theta$, let~$A_{\mu(f)} \in \KK(x) \langle \partial_x \rangle$
  of order $< r$ be such that~$\mu(f) = A_{\mu(f)}(\gamma)$, as
  in~\S\ref{sec:cyclic-vector},
  and let~$R_\mu = A_{\mu(f)}^*(1) \in \KK(x)$, so that
  $\mu(f) = R_\mu \gamma + \partial_x(G_\mu)$, for some~$G_\mu\in
  \mathbb{A}/\ann f$.
  By Proposition~\ref{prop:novector},
  the $\KK$-linear map $\phi : \KK\langle{}\partial_1,\dotsc,\partial_n\rangle \to
  \KK(x)$ defined by~$\phi(\mu) = R_\mu$
  induces an injective map
  $\KK\langle{}\partial_1,\dotsc,\partial_n\rangle/\mathcal{T}_f \to
    \KK(x)/\im L^*$.
  The telescoping ideal $\mathcal{T}_f$ is D-finite if and only if the image
  of this map is finite-dimensional.
  In view of
  Proposition~\ref{prop:dimensioncoker}, it suffices to show that
  the poles of all the~$R_\mu$ lie in a finite subset of~$\calg$.
  This is obtained by proving that
  \begin{equation}\label{eq:4}
   \bigcup_{\mu\in\Theta} \operatorname{poles}(R_\mu) \subseteq \Sing(L) \cup
    \bigcup_{\mu\in\Theta} \Sing(\mu(f)),
  \end{equation}
  where $\Sing(L)$ is the set of zeroes of the leading coefficient of~$L$.

  Indeed, let~$\mu\in\Theta$ and~$\alpha\in\calg$ that is not in the right-hand side.
  We now prove that no coefficient of~$A_{\mu(f)}$ has a pole
  at~$\alpha$, from where it follows that neither has~$R_\mu = A_{\mu(f)}^*(1)$.
  By the hypothesis on~$\alpha$, there exists~$M\in\KK(x)\langle
  \partial_x \rangle$ an annihilating
  operator of~$\mu(f)$ regular at~$\alpha$. It satisfies
  $M A_{\mu(f)}(\gamma) = M(\mu(f))=0$ and by minimality of~$L$ it
  follows that $M A_{\mu(f)} = BL$ for some operator~$B$. As a
  consequence, $0,1,\dots,r-1$ are roots of the indicial
  polynomial of~$MA_{\mu(f)}$. Write~$A_{\mu(f)} = \sum_{i=0}^{r-1}
  a_i \partial_x^i$,
  for some~$a_i\in\KK(x)$ and
  let~$j$ be the maximal index with~$\ord_\alpha a_j =
  \min_i \ord_\alpha a_i$. Then~$j\in\{0,\dots,r-1\}$ and~$\ord_\alpha
  A_{\mu(f)}\!\left((x-\alpha)^j\right)={\min_i \ord_\alpha a_i}$. Then this last quantity
  is a zero of the indicial polynomial of~$M$, which implies that it
  is nonnegative and thus that none of the~$a_i$ has
  a pole
  at~$\alpha$.
\end{proof}

Recall that for~$1 \leq i \leq e$,
the Ore operator~$\partial_i$ satisfies
a commutation relation $\partial_i a = \sigma_i(a) \partial_i + \delta_i(a)$
for any~$a\in \KK$,
where $\sigma_i$~is an endomorphism of~$\KK$
and $\delta_i$~is a $\sigma_i$-derivation of it.
When $\partial_i$~is a differentiation operator, $\sigma_i = \operatorname{id}_{\KK}$.

\begin{corollary}
  If~$\partial_1,\dotsc,\partial_e$ are differentiation operators, then
  $\mathcal{T}_f$ is D-finite for any D-finite function~$f$.
\end{corollary}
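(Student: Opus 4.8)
The plan is to deduce the corollary directly from Theorem~\ref{thm:dfinite-telescoping}: it suffices to prove that, when $\partial_1,\dotsc,\partial_e$ are all differentiation operators, the set $\bigcup_{\mu\in\Theta}\Sing(\mu(f))$ is finite. The guiding idea is that differentiation with respect to a parameter commutes with~$\partial_x$ and never creates new singularities in the variable~$x$, whereas a shift would move them around and let them accumulate.

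First I would set up the relevant module structure. Since $f$ is D-finite, the quotient $V=\mathbb{A}/\ann f$ is finite-dimensional over~$\KK(x)$; fix a $\KK(x)$-basis $e_1,\dotsc,e_d$ with $e_1=f$ (assuming $f\neq0$, else the claim is trivial). Because each $\partial_i$ is a differentiation, it commutes with~$\partial_x$, so $\partial_x$ and $\partial_1,\dotsc,\partial_e$ act on~$V$ as commuting connections, given in the chosen basis by matrices $\Gamma,\Lambda^{(1)},\dotsc,\Lambda^{(e)}$ with entries in~$\KK(x)$. Let $S\subset\calg$ be the union of the $x$-poles of all entries of these matrices, a fixed finite set. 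Writing $\mu(f)=\sum_j c^{(\mu)}_j e_j$ with $c^{(\mu)}_j\in\KK(x)$, I would track the $x$-poles of the coordinate vector~$c^{(\mu)}$.

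The key step is a pole-propagation lemma. Applying~$\partial_i$ transforms coordinates by $c^{(\partial_i\mu)}=\partial_i\bigl(c^{(\mu)}\bigr)+\Lambda^{(i)}c^{(\mu)}$. Here $\partial_i$ differentiates a rational function in~$x$ with respect to the parameter~$t_i$ only; writing an entry as $N/D$ with $N,D\in\KK[x]$, its $t_i$-derivative has denominator~$D^2$, so its set of $x$-poles is unchanged (only multiplicities may increase). The term $\Lambda^{(i)}c^{(\mu)}$ can add $x$-poles only inside~$S$. Starting from $c^{(1)}=(1,0,\dotsc,0)$, which has no pole, an induction on the degree of~$\mu$ then shows that the $x$-poles of~$c^{(\mu)}$ lie in the fixed finite set~$S$ for every $\mu\in\Theta$.

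Finally I would convert this coordinate bound into a bound on $\Sing(\mu(f))$. At any $\alpha\in\calg\setminus S$ the connection~$\Gamma$ is regular and the coordinates $c^{(\mu)}$ are holomorphic, so $\mu(f),\partial_x\mu(f),\partial_x^2\mu(f),\dotsc$ all have coordinates holomorphic at~$\alpha$; extracting a $\KK(x)$-linear dependence among the first $d+1$ of them yields an annihilating operator in $\KK(x)\langle\partial_x\rangle$ that is regular at~$\alpha$, whence $\alpha\notin\Sing(\mu(f))$. Therefore $\bigcup_{\mu\in\Theta}\Sing(\mu(f))\subseteq S$ is finite, and Theorem~\ref{thm:dfinite-telescoping} gives the D-finiteness of~$\mathcal{T}_f$. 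I expect the main obstacle to be this last point, namely relating the abstract singular set $\Sing(\mu(f))$ of the function to regularity of the connection and to the coordinate poles, i.e.\ showing that a point where the whole $\partial_x$-module is regular is nonsingular for each of its elements; the commutation $\partial_i\partial_x=\partial_x\partial_i$ and the pole preservation under~$\partial_i$ are the structural facts that drive the argument and that fail precisely for shift operators.
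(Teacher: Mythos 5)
Your strategy is sound and genuinely different from the paper's: you track the $x$-poles of the coordinates of $\mu(f)$ in a fixed basis of $\mathbb{A}/\ann f$ under the connection matrices, whereas the paper proves $\Sing(\partial_i(g))\subseteq\Sing(g)$ directly by an operator computation, writing $\partial_i M = M\partial_i + R$ for an annihilating operator $M$ of $g=\mu(f)$ regular at $\alpha$ and reading off from the inhomogeneous equation $M(\partial_i(g))=R(g)$ that $\alpha$ is not singular for $\partial_i(g)$. Your pole-propagation lemma is correct, and you correctly isolate the structural fact that drives everything, namely that $\partial_{t_i}$ preserves the set of $x$-poles of a rational function while a shift moves them.

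The gap is in your last step, exactly where you feared it would be. Knowing that $\Gamma$ is regular at $\alpha$ and that the coordinate vectors $v_j$ of $\partial_x^j\mu(f)$ are holomorphic at $\alpha$ does \emph{not} let you conclude that ``a $\KK(x)$-linear dependence among $v_0,\dotsc,v_d$'' yields an operator regular at $\alpha$: the space of such dependences may consist entirely of relations whose leading coefficient vanishes at $\alpha$. Take $d=2$, $\Gamma=0$, $v_0=(1,x^2)$ and $\alpha=0$: then $v_1=(0,2x)$, $v_2=(0,2)$, and the only dependence up to scalar is $x v_2 - v_1=0$, i.e.\ the operator $x\partial_x^2-\partial_x$, which is singular at $0$ --- the classical phenomenon of an apparent singularity. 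The statement you need is nevertheless true, but it requires allowing higher order: the $\mathcal{O}_\alpha$-modules $N_m=\sum_{j\le m}\mathcal{O}_\alpha v_j$ form an increasing chain inside the lattice $\mathcal{O}_\alpha^d$ (which is stable under the connection because $\Gamma$ is regular at $\alpha$), the chain stabilizes by Noetherianity of $\mathcal{O}_\alpha$, and at the stabilization index $m_0$ one gets $v_{m_0+1}=\sum_{j\le m_0}b_j v_j$ with all $b_j\in\mathcal{O}_\alpha$, hence a monic annihilating operator of order $m_0+1$ with coefficients regular at $\alpha$ (order $3$, namely $\partial_x^3$, in the example above). With that lemma supplied in place of the ``first $d+1$ iterates'' argument, your proof goes through.
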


\begin{proof}
  It is sufficient to  check that~$\Sing(\mu(f))\subset
  \Sing(f)$ for any monomial~$\mu\in\Theta$ and then conclude by
  Theorem~\ref{thm:dfinite-telescoping}.

  Let $M\in\KK[x]\langle \partial_x \rangle$ be an annihilating operator of $g=\mu(f)$ regular
  at~$\alpha\in\calg\setminus\Sing(f)$. The commutation rules for the differential operators
  imply that~$\partial_i M = M\partial_i + R$, for some $R \in\KK[x]\langle
  \partial_x\rangle$. In particular, we obtain the inhomogeneous differential
  equation~$M(\partial_i(g)) = R(g)$ for~$\partial_i(g)$. Since~$\alpha$ is neither a
  singularity of~$M$ nor of~$R(g)$, it follows that it is not a singularity
  of~$\partial_i(g)$.
\end{proof}

For the case of general Ore operators, we obtain with a similar proof the
following result.

\begin{corollary}\label{coro:finite-S}
  For any D-finite function~$f$, if there is a finite set $S\subset\calg$ such that:
(i) $\sigma_i(S)\subseteq S$ for any~$1\leq i\leq e$ and (ii) $\Sing(f) \subseteq S$,
  then~$\mathcal{T}_f$ is D-finite.
\end{corollary}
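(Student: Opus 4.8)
The plan is to deduce the statement from Theorem~\ref{thm:dfinite-telescoping}: it suffices to prove that $\bigcup_{\mu\in\Theta}\Sing(\mu(f))$ is finite, and in fact that $\Sing(\mu(f))\subseteq S$ for every monomial $\mu\in\Theta$. I would argue by induction on $\mu$, exactly as in the proof given for differentiation operators. The base case $\mu=1$ is hypothesis~(ii), $\Sing(f)\subseteq S$. For the inductive step, assume $\Sing(\mu(f))\subseteq S$, set $g=\mu(f)$, and show $\Sing(\partial_i(g))\subseteq S$ for each $i$.

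Fix $\alpha\in\calg\setminus S$. The starting point is the commutation identity in the Ore algebra: for any $M\in\KK[x]\langle\partial_x\rangle$ one has $\partial_i M=\sigma_i(M)\partial_i+\delta_i(M)$, where $\sigma_i(M)$ and $\delta_i(M)$ are obtained by applying $\sigma_i$, resp.\ $\delta_i$, to the coefficients of $M$. Choosing $M$ to be an annihilator of $g$ and using $M(g)=0$ yields the inhomogeneous equation
\[ \sigma_i(M)\bigl(\partial_i(g)\bigr)=-\delta_i(M)(g). \]
Two observations make the right-hand side harmless. Since $\partial_i$ acts only on the parameters, $\sigma_i(x)=x$ and $\delta_i(x)=0$, so $\delta_i$ maps $\KK[x]$ into itself; hence $\delta_i(M)$ again has polynomial coefficients, and $-\delta_i(M)(g)$ has no singularity where $g$ has none. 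As $\alpha\notin S\supseteq\Sing(g)$, the right-hand side is non-singular at $\alpha$.

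The crux is to arrange that the operator $\sigma_i(M)$ on the left be regular at $\alpha$. Applying $\sigma_i$ coefficientwise transports the singular locus of $M$ to $\sigma_i(\Sing(M))$, so $\sigma_i(M)$ is regular at $\alpha$ precisely when $M$ is regular at $\sigma_i^{-1}(\alpha)$. Here I use hypothesis~(i): as $\sigma_i$ is an automorphism of $\KK$, its extension to $\calg$ is a bijection, and $\sigma_i(S)\subseteq S$ together with the finiteness of $S$ forces $\sigma_i(S)=S$, hence $\sigma_i^{-1}(S)=S$. Thus $\beta:=\sigma_i^{-1}(\alpha)$ again lies outside $S$, so $\beta\notin\Sing(g)$ and $g$ admits an annihilator $M$ regular at $\beta$; then $\sigma_i(M)$ is regular at $\alpha$. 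Composing an annihilator of the right-hand side (regular at $\alpha$) with $\sigma_i(M)$ produces an annihilator of $\partial_i(g)$ regular at $\alpha$, exactly as in the differentiation case, whence $\alpha\notin\Sing(\partial_i(g))$. This closes the induction and supplies the finiteness required by Theorem~\ref{thm:dfinite-telescoping}.

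I expect the only delicate point to be the bookkeeping of the singular locus under $\sigma_i$: verifying that $\sigma_i$ sends $\Sing(M)$ to $\sigma_i(\Sing(M))$, and that the invariance $\sigma_i(S)\subseteq S$ is exactly what keeps every $\sigma_i^{-1}(\alpha)$ outside $S$. The companion facts—that $\delta_i$ preserves $\KK[x]$, and that regularity is preserved when solving a regular inhomogeneous equation with regular right-hand side—are routine and identical to the differentiation corollary.
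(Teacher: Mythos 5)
Your proof is correct and follows exactly the route the paper intends: the paper gives no details for this corollary, stating only that it follows ``with a similar proof'' from the differentiation case, and your argument is precisely that adaptation --- inducting on monomials via Theorem~\ref{thm:dfinite-telescoping}, replacing the commutator identity by $\partial_i M=\sigma_i(M)\partial_i+\delta_i(M)$, and using hypothesis~(i) (with injectivity of $\sigma_i$ and finiteness of $S$) to guarantee that the twisted operator $\sigma_i(M)$ can be made regular at any $\alpha\notin S$. The one point worth recording explicitly is the transport of the singular locus, $\Sing(\sigma_i(M))=\sigma_i(\Sing(M))$, which you correctly identify as the crux and which is exactly what hypothesis~(i) is designed for.
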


\section{Experiments}

\begin{table}[t]
  \centering
  \begin{tabular}{lccccccc}
    \toprule
    Integral& \eqref{eq:9} & \eqref{eq:10} & \eqref{eq:8} &\eqref{eq:new}& \eqref{eq:7} & \eqref{eq:int1} & \eqref{eq:int2}  \\ \midrule
    \textsf{redct} &  13 s        & > 1h & > 1h &1.5 s& 1.5 s         &    165 s &     53 s \\
    \textsf{HF-CT} &  19 s        &    253 s &     45 s &232 s& 516 s         & >1h & >1h \\
    \textsf{HF-FCT}& 1.9 s* &    2.3 s&    5.3 s &>1h& 2.3 s* & 5.4 s & 2.2 s* \\\bottomrule
  \end{tabular}
  \medskip
  \caption{
    Comparative timings on several instances of creative telescoping.
    Rows are \textsf{redct} (new algorithm);
    Koutschan's \textsf{HolonomicFunctions}, using functions \texttt{Annihilator}
    and \texttt{CreativeTelescoping} (HF-CT); idem, using \texttt{FindCreativeTelescoping} (HF-FCT), a heuristic that does not necessarily find the minimal operators (indicated by *).
    All examples were run on the same machine, with the latest versions of Maple and Mathematica. }
  \label{tab:timings}
\end{table}

We present the results of a preliminary Maple implementation called
\textsf{redct}\footnote{
Available with example sessions at
\url{https://specfun.inria.fr/chyzak/redct/}.}.
Comparison is done with Koutschan's \textsf{HolonomicFunctions}
package~\cite{Koutschan2009},
the best available code for creative telescoping.
Timings are given in Table~\ref{tab:timings}\footnote{
When our code does not terminate, time is spent computing the
exceptional set. This seems to be due to
apparent singularities of the operators, that become true
singularities of their adjoint. Ways of circumventing this issue are
under study.}.

\paragraph{Koutschan's examples}
\label{sec:CKexs}

Koutschan's example session~\cite{Koutschan2017} contains~40 integrals on which we tested our code.
In most cases, our code compares well with \textsf{HolonomicFunctions}.
There are 37 easy cases,
all of whose telescopers are found in 3.5~sec. by \textsf{redct}, while 16~sec. are needed by \textsf{HolonomicFunctions} (but that includes certificates).
The three other examples are (the nature of the parameters is indicated in
the brackets, $\smash{C_n^{(\alpha)}}$ denotes Gegenbauer polynomials,
and $J_1$, $I_1$,
etc. Bessel functions):
\begingroup\eqsmall
\begin{gather}\label{eq:9}
\int{\frac{2J_{m+n}(2tx)T_{m-n}(x)}{\sqrt{1-x^2}}\,\ud x} \quad \text{[diff. $t$, shift $n$ and~$m$]}, \\
\label{eq:10}
  \int_0^1 C_n^{(\lambda)}(x) C_m^{(\lambda)}(x) C_{\ell}^{(\lambda)}(x)
  (1-x^2)^{\lambda - \frac12} \,\ud x  \quad \text{[shift $n$, $m$, $\ell$]}, \\
\label{eq:8}
  \int_0^\infty x J_1(ax) I_1(ax) Y_0(x) K_0(x) \,\ud x \quad \text{[diff. $a$]}.
\end{gather}
\endgroup

\paragraph{Longer examples}
We mention a few examples, some involving
Gegenbauer polynomials~\cite[2.21.18.2, 2.21.18.4]{PrBrMa88}, that take more time. The advantage of a reduction-based approach becomes visible.
\begin{small}
\begin{gather}
  \int{\tfrac{n^2+x+1}{n^2+1}\left(\tfrac{(x+1)^2}{(x-4)(x-3)^2(x^2-5)^3}\right)^{\!n}\!\sqrt{x^2-5}\,e^{\frac{x^3+1}{x(x-3)(x-4)^2}}\ud x}\;\,\text{[shift $n$]}, \label{eq:new} \\
  \int{C_m^{(\mu)}(x)C_n^{(\nu)}(x)(1-x^2)^{\nu-1/2}\,\ud x} \quad \text{[shift $n$, $m$, $\mu$, $\nu$]}, \label{eq:7}
\end{gather}
\begin{gather}
  \int{x^\ell C_m^{(\mu)}(x)C_n^{(\nu)}(x)(1-x^2)^{\nu-1/2}\,\ud x} \quad \text{[shift $\ell$, $m$, $n$, $\mu$, $\nu$]}, \label{eq:int1} \\
\begin{split}
\int{(x+a)^{\gamma+\lambda-1}(a-x)^{\beta-1}C_m^{(\gamma)}(x/a)
  C_n^{(\lambda)}(x/a)\,\ud x},\\
\quad \text{[diff. $a$, shift $n,m,\beta,\gamma,\lambda$]}.\label{eq:int2}
\end{split}
\end{gather}
\end{small}

\section{Conclusion}
A closer look at our algorithm reveals several aspects of the
complexity of creative telescoping. To simplify the discussion, we
restrict to the bivariate case and
measure the \emph{arithmetic complexity}, obtained by counting
arithmetic operations in~$\QQ$. We look for
bounds in terms of the
\emph{input size} (order and degree of the operators at hand).

In this setting, \emph{the complexity of computing~$\mathcal{T}_f$ is
not
bounded polynomially} (whatever the algorithm). Consider for instance,
the integral representation of Hermite polynomials
\[ H_n(t) = \frac{2^n}{i \sqrt{\pi}} \int_{-i\infty}^{i\infty} (t+x)^n e^{x^2}
  \ud x. \]
If one computes a telescoper over~$\QQ(n,t)$, then our algorithm
produces the classical  differential equation $y''+2n y = 2t y'$.
However, if~$n$ is a given positive integer then the
minimal telescoper is the first-order factor~$H_n(t)\partial_t-H_n'
(t)$, with
coefficients of degree~$n$.
Its size
\emph{is exponential in the bit
size} of the input.
Thus, \emph{no algorithm computing the minimal telescoper can
run in polynomial complexity}.

However, in the frequent cases like this one where the set~$S$ of
singularities discussed in Corollary~\ref{coro:finite-S} is bounded
polynomially in terms of the size of the input, then the
dimension of the quotient and therefore \emph{the order of the
telescopers
is bounded polynomially} as a consequence of Adolphson's result
(Proposition~\ref{prop:dimensioncoker}).
The non-polynomial cost of minimality thus resides only in the degree
of the coefficients. %
Note that in the differential case,
polynomial time computation of non-minimal telescopers is also achieved by well-known methods in
holonomy theory, e.g.,~\cite[proof of Lemma~3]{Lipshitz88}.

In our algorithm, the non-polynomial complexity arises first in the computation
of the exceptional set~$\Exc$ and next in the reductions by the
elements of this set. Removing this part of the computation and
using the weak Hermite reduction yields a weak form of the
algorithm that does not find minimal telescopers
but runs in polynomial complexity, if the set~$S$ has
polynomial size.

\medskip \paragraph{Acknowledgement.}
This work was supported in part by
FastRelax ANR-14-CE25-0018-01.

\bibliographystyle{abbrv}

\begin{thebibliography}{10}

\bibitem{Abramov99}
S.~A. Abramov.
\newblock E{G}-eliminations.
\newblock {\em J. Differ. Equations Appl.}, 5(4-5):393--433, 1999.

\bibitem{AbKv91}
S.~A. Abramov and K.~Y. Kvashenko.
\newblock Fast algorithms for the search of the rational solutions of linear
  differential equations with polynomial coefficients.
\newblock In {\em ISSAC'91}, pages 267--270, 1991.

\bibitem{AbPe01}
S.~A. Abramov and M.~Petkov{\v{s}}ek.
\newblock Minimal decomposition of indefinite hypergeometric sums.
\newblock In {\em ISSAC'01}, pages 7--14. ACM, 2001.

\bibitem{AbHo99}
S.~A. Abramov and M.~van Hoeij.
\newblock Integration of solutions of linear functional equations.
\newblock {\em Integral Transform. Spec. Funct.}, 8(1-2):3--12, 1999.

\bibitem{Adjamagbo88}
K.~Adjamagbo.
\newblock Sur l'effectivit\'e du lemme du vecteur cyclique.
\newblock {\em C. R. Acad. Sci. Paris S\'er. I Math.}, 306(13):543--546, 1988.

\bibitem{Adolphson76}
A.~Adolphson.
\newblock An index theorem for {$p$}-adic differential operators.
\newblock {\em Trans. Amer. Math. Soc.}, 216:279--293, 1976.

\bibitem{AlZe90}
G.~Almkvist and D.~Zeilberger.
\newblock The method of differentiating under the integral sign.
\newblock {\em J. Symbolic Comput.}, 10(6):571--591, 1990.

\bibitem{Barkatou99}
M.~A. Barkatou.
\newblock On rational solutions of systems of linear differential equations.
\newblock {\em J. Symbolic Comput.}, 28(4-5):547--567, 1999.

\bibitem{BoChChLi10}
A.~Bostan, S.~Chen, F.~Chyzak, and Z.~Li.
\newblock Complexity of creative telescoping for bivariate rational functions.
\newblock In {\em I{SSAC}'10}, pages 203--210. ACM, 2010.

\bibitem{BoChChLiXi13}
A.~Bostan, S.~Chen, F.~Chyzak, Z.~Li, and G.~Xin.
\newblock Hermite reduction and creative telescoping for hyperexponential
  functions.
\newblock In {\em I{SSAC}'13}, pages 77--84. ACM, 2013.

\bibitem{BoDuSa16}
A.~Bostan, L.~Dumont, and B.~Salvy.
\newblock Efficient algorithms for mixed creative telescoping.
\newblock In {\em ISSAC'16}, pages 127--134. ACM, 2016.

\bibitem{BoLaSa13}
A.~Bostan, P.~Lairez, and B.~Salvy.
\newblock Creative telescoping for rational functions using the
  {G}riffiths-{D}work method.
\newblock In {\em I{SSAC}'13}, pages 93--100. ACM, 2013.

\bibitem{ChHuKaLi15}
S.~Chen, H.~Huang, M.~Kauers, and Z.~Li.
\newblock A modified {A}bramov-{P}etkov\v sek reduction and creative
  telescoping for hypergeometric terms.
\newblock In {\em I{SSAC}'15}, pages 117--124. ACM, 2015.

\bibitem{ChKaKo16}
S.~Chen, M.~Kauers, and C.~Koutschan.
\newblock Reduction-based creative telescoping for algebraic functions.
\newblock In {\em ISSAC'16}, pages 175--182. ACM, 2016.

\bibitem{ChKaSi12}
S.~Chen, M.~Kauers, and M.~F. Singer.
\newblock Telescopers for rational and algebraic functions via residues.
\newblock In {\em I{SSAC}'12}, pages 130--137. ACM, 2012.

\bibitem{ChHoKaKo18}
S.~Chen, M.~van Hoeij, M.~Kauers, and C.~Koutschan.
\newblock Reduction-based creative telescoping for fuchsian {D}-finite
  functions.
\newblock {\em J. Symbolic Comput.}, 85:108--127, 2018.

\bibitem{ChurchillKovacic2002}
R.~C. Churchill and J.~J. Kovacic.
\newblock Cyclic vectors.
\newblock In {\em Differential Algebra and Related Topics}, pages 191--218.
  World Scientific, 2002.

\bibitem{Chyzak98}
F.~Chyzak.
\newblock {\em Fonctions holonomes en calcul formel}.
\newblock Ph{D} {T}hesis, {\'E}cole polytechnique, 1998.

\bibitem{Chyzak00}
F.~Chyzak.
\newblock An extension of {Z}eilberger's fast algorithm to general holonomic
  functions.
\newblock {\em Discrete Math.}, 217(1-3):115--134, 2000.

\bibitem{Chyzak14}
F.~Chyzak.
\newblock {\em The ABC of Creative Telescoping --- Algorithms, Bounds,
  Complexity}.
\newblock Accreditation to supervise research ({HDR}), {\'E}cole polytechnique,
  Apr. 2014.

\bibitem{ChSa98}
F.~Chyzak and B.~Salvy.
\newblock Non-commutative elimination in {O}re algebras proves multivariate
  identities.
\newblock {\em J. Symbolic Comput.}, 26(2):187--227, 1998.

\bibitem{Davenport86}
J.~H. Davenport.
\newblock The {R}isch differential equation problem.
\newblock {\em SIAM J. Comput.}, 15(4):903--918, 1986.

\bibitem{Dickson13}
L.~E. Dickson.
\newblock Finiteness of the {O}dd {P}erfect and {P}rimitive {A}bundant
  {N}umbers with {$n$} {D}istinct {P}rime {F}actors.
\newblock {\em Amer. J. Math.}, 35(4):413--422, 1913.

\bibitem{FaugereGianniLazardMora-1993-ECZ}
J.~C. Faugère, P.~Gianni, D.~Lazard, and T.~Mora.
\newblock Efficient computation of zero-dimensional {G}röbner bases by change
  of ordering.
\newblock {\em J. Symbolic Comput.}, 16(4):329--344, 1993.

\bibitem{Fuchs1870}
L.~Fuchs.
\newblock Die {P}eriodicit{\"a}tsmoduln der hyperelliptischen {I}ntegrale als
  {F}unctionen eines {P}arameters aufgefasst.
\newblock {\em J. Reine Angew. Math.}, 71:91--127, 1870.

\bibitem{GeLeLi04}
K.~Geddes, H.~Le, and Z.~Li.
\newblock Differential rational normal forms and a reduction algorithm for
  hyperexponential functions.
\newblock In {\em ISSAC'04}, pages 183--190, 2004.

\bibitem{Hermite1872}
C.~Hermite.
\newblock Sur l'int\'egration des fractions rationnelles.
\newblock {\em Ann. Sci. \'Ecole Norm. Sup. (2)}, 1:215--218, 1872.

\bibitem{Huang16}
H.~Huang.
\newblock New bounds for hypergeometric creative telescoping.
\newblock In {\em ISSAC'16}, pages 279--286. ACM, 2016.

\bibitem{Ince44}
E.~L. Ince.
\newblock {\em Ordinary {D}ifferential {E}quations}.
\newblock Dover Publications, New York, 1944.

\bibitem{Koutschan2017}
C.~Koutschan.
\newblock Examplesv11.nb.
\newblock On the \textsf{HolonomicFunctions}
  \href{http://www.risc.jku.at/research/combinat/software/ergosum/RISC/HolonomicFunctions.html}{web
  page}.

\bibitem{Koutschan2009}
C.~Koutschan.
\newblock {\em Advanced Applications of the Holonomic Systems Approach}.
\newblock PhD thesis, RISC-Linz, 2009.

\bibitem{Lairez16}
P.~Lairez.
\newblock Computing periods of rational integrals.
\newblock {\em Math. Comp.}, 85(300):1719--1752, 2016.

\bibitem{Liouville33}
J.~Liouville.
\newblock Second mémoire sur la détermination des intégrales dont la valeur
  est algébrique.
\newblock {\em Journal de l'École polytechnique}, 14:149--193, 1833.

\bibitem{Lipshitz88}
L.~Lipshitz.
\newblock The diagonal of a {$D$}-finite power series is {$D$}-finite.
\newblock {\em J. Algebra}, 113(2):373--378, 1988.

\bibitem{Malgrange74}
B.~Malgrange.
\newblock Sur les points singuliers des \'equations diff\'erentielles.
\newblock {\em Enseignement Math. (2)}, 20:147--176, 1974.

\bibitem{Monsky72}
P.~Monsky.
\newblock Finiteness of de {R}ham cohomology.
\newblock {\em Amer. J. Math.}, 94:237--245, 1972.

\bibitem{Ostrogradsky1845}
M.~Ostrogradsky.
\newblock De l'int{\'e}gration des fractions rationnelles.
\newblock {\em Bull.\ classe phys.-math. Acad.\ Imp{\'e}riale des Sciences
  Saint-P{\'e}tersbourg}, 4:145--167, 286--300, 1845.

\bibitem{Picard1902}
{\'E}.~Picard.
\newblock Sur les int{\'e}grales doubles de fonctions rationnelles dont tous
  les r{\'e}sidus sont nuls.
\newblock {\em Bull. Sci. Math. (2)}, 26:143--152, 1902.

\bibitem{PiSi1897}
{\'E}.~Picard and G.~Simart.
\newblock {\em Th{\'e}orie des fonctions alg{\'e}briques de deux variables
  ind{\'e}pendantes}, volume {I} (1897) and {II} (1906).
\newblock {G}authier-{V}illars et fils, 1897.

\bibitem{PrBrMa88}
A.~P. Prudnikov, Y.~A. Brychkov, and O.~I. Marichev.
\newblock {\em Integrals and series. {V}ol. 2}.
\newblock Gordon \& Breach Science Publishers, NY, second edition, 1988.
\newblock Special functions.

\bibitem{Rezaoui01}
M.~S. Rezaoui.
\newblock Indice polynomial d'une matrice d'op\'erateurs diff\'erentiels.
\newblock {\em C. R. Acad. Sci. Paris S\'er. I Math.}, 332(6):505--508, 2001.

\bibitem{Takayama1992}
N.~Takayama.
\newblock An approach to the zero recognition problem by {B}uchberger
  algorithm.
\newblock {\em Journal of Symbolic Computation}, 14:265--282, 1992.

\bibitem{Trager84}
B.~M. Trager.
\newblock {\em Integration of Algebraic Functions}.
\newblock Ph{D} {T}hesis, MIT, 1984.

\bibitem{Hoeven17}
J.~van~der Hoeven.
\newblock Constructing reductions for creative telescoping, 2017.
\newblock Technical Report, HAL 01435877,
  \url{http://hal.archives-ouvertes.fr/hal-01435877/}.

\bibitem{PuRe08}
M.~van~der Put and M.~Reversat.
\newblock A local-global problem for linear differential equations.
\newblock {\em Pacific J. Math.}, 238(1):171--199, 2008.

\bibitem{PuSi03}
M.~van~der Put and M.~F. Singer.
\newblock {\em Galois theory of linear differential equations}, volume 328 of
  {\em Grundlehren der Mathematischen Wissenschaften}.
\newblock Springer, 2003.

\bibitem{Yun76}
D.~Y.~Y. Yun.
\newblock On square-free decompositions algorithms.
\newblock In {\em Proc. 1976 ACM Symposium on Symbolic and Algebraic
  Computation}, pages 26--35. ACM, 1976.

\bibitem{Yun77}
D.~Y.~Y. Yun.
\newblock Fast algorithm for rational function integration.
\newblock In {\em {P}roc. {IFIP}'77 {C}ongr., {T}oronto, {O}nt.}, pages
  493--498. North-Holland, 1977.

\end{thebibliography}

\end{document}